\renewcommand{\paragraph}[1]{\vspace{0.5em}\noindent\textbf{#1}.}
  \providecommand\BibTeX{{%
    \normalfont B\kern-0.5em{\scshape i\kern-0.25em b}\kern-0.8em\TeX}}}
\newcommand{\mynote}[3]{
    \fbox{\bfseries\sffamily\scriptsize#1}
{\small$\blacktriangleright$\textsf{\emph{\color{#3}{#2}}}$\blacktriangleleft$}}
\newcommand{\mynote}[3]{}}
\scriptsize\color{Gray},
\newcommand*{\ColorIfNotInString}[1]{\iftoggle{InString}{#1}{\color{blue}#1}}%
\newcommand*{\ProcessQuote}[1]{#1\iftoggle{InString}{\global\togglefalse{InString}}{\global\toggletrue{InString}}}%
\definecolor{code_indent}{HTML}{CCCCCC}
\newenvironment{figureAsListing}
    {
    \addtocounter{figure}{-1}
    \refstepcounter{lstlisting}
     
    \begin{figure}[!htbp]
        
        \centering
    }
    { 
        \end{figure} 
    }
\newcommand{\ett}{\emph{ETT}}
\newcommand{\findRoot}{\texttt{find\_root(u)}}
\newcommand{\removeEdge}{\texttt{remove\_edge(u, v)}}
\newcommand{\addEdge}{\texttt{add\_edge(u, v)}}
\newcommand{\connected}{\texttt{connected(u, v)}}
\newtheorem{theorem}{Theorem}[section]
\newtheorem{lemma}[theorem]{Lemma}
\def\O{\mathcal{O}}
\title{A Scalable Concurrent Algorithm for Dynamic Connectivity}
\author{Alexander Fedorov}
\email{alexander.fedorov@jetbrains.com}
\affiliation{%
  \institution{JetBrains Research}
}
\author{Nikita Koval}
\email{nikita.koval@jetbrains.com}
\affiliation{%
  \institution{JetBrains Research}
}
\author{Dan Alistarh}
\email{dan.alistarh@ist.ac.at}
\affiliation{%
  \institution{IST Austria}
}
\keywords{concurrent data structures, dynamic connectivity, lock-freedom, graph}
\begin{document}

\begin{abstract}
\emph{Dynamic Connectivity} is a fundamental algorithmic graph problem, motivated by a wide range of applications to social and communication networks
and used as a building block in various other algorithms, such as the bi-connectivity and the dynamic minimal spanning tree problems.
In brief, we wish to maintain the connected components of graph under dynamic edge insertions and deletions. 
In the sequential case, the problem has been well-studied from both theoretical and practical perspectives. 
However, much less is known about efficient \emph{concurrent} solutions to this problem. This is the gap we address in this paper. 

We start from one of the classic data structures used to solve this problem, the Euler Tour Tree. 
Our first contribution is a non-blocking \emph{single-writer} implementation of it. 
We leverage this data structure to obtain the first truly concurrent generalization of dynamic connectivity, which preserves the time complexity of its sequential counterpart, but is also scalable in practice. 
To achieve this, we rely on three main techniques.
The first is to ensure that \emph{connectivity queries}, which usually dominate real-world workloads, are non-blocking. 
The second non-trivial technique expands the above idea by making all queries that do not change the connectivity structure \emph{non-blocking}. 
The third ingredient is applying fine-grained locking for updating the connected components, which allows operations on disjoint components to occur in parallel.

We evaluate the resulting algorithm on various workloads, executing on both real and synthetic graphs. The results show the efficiency of each of the proposed optimizations; the most efficient variant improves the performance of a coarse-grained based implementation on realistic scenarios up to 6x on average and up to 30x when connectivity queries dominate.

\end{abstract}

\maketitle

\section{Introduction}

\emph{Dynamic connectivity} is a fundamental algorithmic problem on graphs under edge updates, motivated by a wide range of applications in the context of social and communication networks. It is also used as a key building block for various other algorithms, such as the bi-connectivity and the dynamic minimal spanning tree problems. Essentially, the goal is to maintain a query on whether two vertices are in the same connected component under edge insertions and deletions. Thus, the following operations on undirected graph $G$ should be supported:
\begin{itemize}
    \item \texttt{addEdge(u, v)} adds a new edge $(u, v)$ to $G$;
    \item \texttt{removeEdge(u, v)} removes the edge ${(u, v)}$ from $G$;
    \item \texttt{connected(u, v): Bool} checks whether $u$ and $v$ are in the same component of connectivity, returning \texttt{true} in this case.
\end{itemize}
We refer to both \texttt{addEdge(u, v)} and \texttt{removeEdge(u, v)} as \emph{write} operations or \emph{modifications}, and to \texttt{connected(u, v)} as a \emph{read} operation or \emph{query}.
Intuitively, read operations are easier to parallelize, while modifications require non-trivial synchronization.

One classic approach to solve the dynamic connectivity problem is to use \emph{Euler Tour Trees (ETT)}~\cite{Henzinger1999} to maintain a spanning forest of the graph. This provides a relatively simple way to implement both edge additions and connectivity queries in $\O(\log N)$ time per operation, where $N$ is the number of graph vertices. However, edge removals are non-trivial. The challenge is that, when a spanning edge is removed, there can be an edge that reconnects the spanning trees; this edge needs to be identified and added to the spanning forest. The classic algorithms proposed by Henzinger et al.~\cite{Henzinger1999} and Holm et al.~\cite{Holm2001} partition non-spanning edges into levels, to achieve amortized polylogarithmic time complexity per modification. In turn, these algorithms were improved to $\O(\log N (\log \log N)^2)$ and $\O(\frac {\log^2 N} {\log \log N})$ time complexity in~~\cite{Wulff-Nilsen2013}~and~\cite{Huang2017}, respectively, using similar ideas. Interestingly, these algorithms are close to the theoretical lower bounds~\cite{Patrascu2005,Patrascu2011}.

Even though this problem is well-studied in the sequential world in both theoretical and practical aspects~\cite{seq-experiments}, to our knowledge there is still no truly concurrent algorithm with polylogarithmic time per operation. Nonetheless, we highlight a few approaches which take advantage of parallelism for variants of this problem. 

Aksenov et al. introduced a \emph{parallel combining} technique~\cite{Aksenov2018}. which allows performing read operations in parallel. In contrast, all modifications are performed sequentially and cannot be executed in parallel with reads. To evaluate this technique, they applied it to the dynamic connectivity problem so that concurrent \texttt{connected(u, v)} invocations can be executed in parallel {---} the idea is similar to the readers-writer lock. However, this is essentially a sequential algorithm which leverages the \emph{flat combining} technique~\cite{flat-combining}. 

Acar et al.~\cite{Acar2019} proposed an algorithm which handles operations \emph{in batches}, allowing some operations in a batch to be processed in parallel and therefore more efficiently. One shortcoming of this approach is that it requires requests to be batched, and, as such, it is not a truly concurrent solution. 




\paragraph{Our Contribution}
In this paper, we present the first concurrent solution for the dynamic connectivity problem. Our algorithm is \emph{linearizable}~\cite{Herlihy1990Linearizability}.
Similarly to many sequential dynamic connectivity algorithms,
we use \emph{Euler Tour Trees} as the backbone structure, where each tree represents one connected component. Instead of making the ETT algorithm fully concurrent, we present a \emph{single-writer} solution which implements the \texttt{connected(..)} operation in a non-blocking way; hence, all the modifications should be guarded by a lock. 
In short, we partition \ett{} merge and split operations into their logical and physical components, so these operations appear atomic to concurrent readers, regardless of internal cuts and links. To make concurrent reads linearizable, we introduce versioning for the roots of the Cartesian trees~\cite{cartesian-trees} {---} which represent Euler Tour Trees; please see  Section~\ref{sec:ett} for a detailed overview. These versions help to detect connectivity changes when snapshotting roots for the connectivity check. The ideas we introduce can be employed for developing a concurrent single-writer version of any balanced tree which supports the standard \texttt{merge(A, B)}, \texttt{split(T)}, \texttt{add(u)}, \texttt{remove(u)}, and \texttt{same\_tree(u, v)} operations. 

Using the algorithm of Holm et al.~\cite{Holm2001} for the dynamic connectivity problem and these single-writer \ett{}-s, we automatically get a single-writer dynamic connectivity algorithm.
We find the ``single-writer'' restriction reasonable in this application, since it allows us to implement Euler Tour Trees almost as efficiently as in the sequential case, while allowing  queries, which are usually the most frequent operations, to run in parallel. 
Furthermore, this assumption reduces the overall complexity compared to a potential fully lock-free algorithm, which is bound to involve extremely complex book-keeping of components. 
Specifically, imagine a fully lock-free generalization of the  algorithm  of Holm et al. for the dynamic connectivity problem, e.g., implemented via lock-free software transactions. Since a replacement search takes $\Omega(N)$ in the worst case, and other operations can cause it to fail and re-run, the upper bound on the total work would be $\Omega(MN)$, which is substantially worse than $\O(M \log^2 N)$ of the sequential solution, where $M$ is the number of operations. 

Next, to improve scalability, we discuss how fine-grained locking can be applied to our proposal, so that updates on different components can be naturally parallelized. Surprisingly, this variant also requires our single-writer \ett{}, even when all operations, including read queries, are executed under locks. 
A further optimization we propose gives a way to perform edge additions and removals that do not change the spanning forest (maintained via \ett{}-s) in a non-blocking way. 

Finally, we evaluate the resulting algorithm on various workloads, both on real-world and synthetic graphs. The results show the efficiency of each of the proposed optimizations; the most efficient variant improves the performance of a coarse-grained locking implementation on realistic scenarios up to 6x on average and up to 30x when connectivity queries dominate the workload. 

\section{Related Work}
\paragraph{Sequential Algorithms}
Variants of the connectivity problem have a rich history in the sequential algorithm literature, e.g.~\cite{tarjan1975efficiency, tarjan1979class, shiloach1981line}. 
The first polylogarithmic-time dynamic connectivity algorithm was proposed by Henzinger and King~\cite{Henzinger1999}. The main idea is to maintain a  spanning forest using \emph{Euler Tour Trees (ETTs)} 
and partition the rest of the graph into $\O(\log N)$ levels. We discuss Euler Tour Trees in detail in Section~\ref{sec:ett}. 
Whenever removing spanning edges, they employ a \emph{random sampling} technique to find an edge that can reconnect the trees. As a result, updates achieve randomized $\O(\log^3 N)$ work on average per operation, while \connected{} queries require only $\O(\frac {\log N} {\log \log N})$ work.

A faster and arguably simpler \emph{deterministic} algorithm was introduced by Holm, Lichtenberg, and Thorup~\cite{Holm2001}. It is based on the same ideas as the algorithm of Henzinger and King and uses nested levels of edges to get deterministic amortized $\O(\log^2 N)$ per operation and the same $\O(\frac {\log N} {\log \log N})$ time complexity for reads. Later, Thorup decreased the amount of used memory to $\O(N+M)$~\cite{Thorup2000}, where $M$ is the number of operations.

Further algorithms proposed additional enhancements. First, in 2011, Wulff-Nilsen  introduced the \emph{shortcut structure} and \emph{lazy local trees} to optimize modification time complexity by a $\log \log N$ factor~\cite{Wulff-Nilsen2013}. Second, in 2015, Huang et al. applied these ideas to the Henzinger and King algorithm, achieving $\O(\log N (\log \log N)^2)$ amortized modification time and $\O(\frac {\log N} {\log \log \log N})$ query time.

As for the algorithms optimizing worst-case time, the best known algorithm was proposed by Wang in 2015~\cite{Wang2015}. It employs the \emph{cutset structure} to get $\O(\log^4 N)$ modification time with an exponentially small probability of an error. The existence of a deterministic algorithm with polylogarithmic worst-case time per operation is still open. The best such algorithm has $\O\left(\frac {\sqrt N} {w^{1/4}}\right)$ time by combining the \emph{square root decomposition} and \emph{bit compression}, where $w$ is the machine word size.

Patrascu and Demaine proved that the lower bound on operation time complexity is $\Omega(\frac {\log N} {\log \log n})$~\cite{Patrascu2005}. Furthermore, Patrascu and Thorup showed that the modification time can not be $o(\log N)$, unless the query time is $\Omega(N^{1 -o(1)})$~\cite{Patrascu2011}. Consequently, all presented polylogarithmic algorithms can be seen as ``nearly-optimal.''

\paragraph{Parallel Algorithms}
Acar et al. in 2019 suggested employing parallelism to process batches of operations of the same type~\cite{Acar2019}. Their algorithm has no total work overhead due to multiple threads and becomes asymptotically faster with larger batch sizes, however, even for batches of size $n^{1-\varepsilon}$ there still will be no asymptotic difference, where $\varepsilon > 0$ is any constant. Another problem is the requirement to have large batches of queries of the same type. Meanwhile, if there is a \emph{concurrent} algorithm, it is trivial to use it to process batches in parallel, which means that a \emph{concurrent} algorithm can solve more general problems.

To construct a concurrent algorithm, Aksenov et al. applied the \emph{flat combining} technique~\cite{flat-combining} to the Holm et al. algorithm in~\cite{Aksenov2018}. However, this technique's standard implementation does not provide much better performance than the coarse-grained locking algorithm. Therefore, they introduced its read-optimized version named \emph{parallel combining}, which processes query operations in parallel and scales better than the standard flat-combining on read-dominated scenarios.

A concurrent lock-free algorithm was proposed by Chatterjee et al.~\cite{Chatterjee2019}. They develop a BFS-based connectivity query with linear time complexity per operation, which is inferior to the polylogarithmic time of the sequential algorithms. 

As far as Euler Tour Trees are concerned, Tseng et al. designed a batch-parallel ETT algorithm using a phase-concurrent skip-list~\cite{Tseng2019BatchParallelET}. \emph{Phase-concurrent} means that no concurrent operations of different types are allowed. However, in our algorithm we focus on allowing concurrent reads and updates  because reads are fast and frequent, while modifications are slow and rare in realistic read-dominated scenarios. Intuitively, in such scenarios, slow and rare operations should not cause fast and frequent operations to wait.

Compressed functional trees with merge and union operations were used by Dhulipala et al. for low-latency graph streaming~\cite{Dhulipala2019LowlatencyGS}. They get concurrent reads ``for free'', since they employ \emph{immutable} data structures. Similarly to Tseng et al.~\cite{Tseng2019BatchParallelET} and Acar et al.~\cite{Acar2019}, they use parallelism to process batches of operations. Nevertheless, in our problem trees should have links both from parents to children and from children to parents, because connectivity queries traverse trees from bottom to top. This makes batch-parallel processing and partially-persistent data structures much more difficult, so the same ideas can not be applied easily to the dynamic connectivity problem.

\section{Single-Writer Euler Tour Tree}\label{sec:ett}
As a first step, we present a concurrent single-writer multiple-reader lock-free \emph{Euler Tour Tree (ETT)} algorithm. 

\setlength{\columnsep}{5pt}
\setlength{\intextsep}{5pt}
\begin{wrapfigure}{r}{0.12\textwidth}
\includegraphics[width=0.12\textwidth]{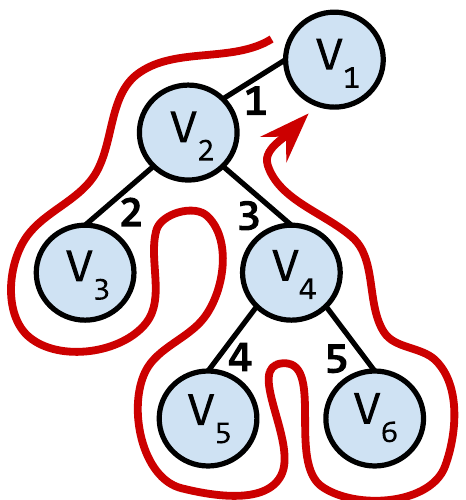}
\end{wrapfigure}

\paragraph{Sequential Euler Tour Tree}
In short, \ett{} specifies an undirected rooted tree as a graph traversal that starts and ends in the root and covers all edges exactly twice {---} see the picture on the right as an example. This traversal is exactly the same as in the depth-first search (DFS) graph algorithm started from the root.

Henzinger and King solved the dynamic connectivity problem by representing each component via an \ett{}~\cite{Henzinger1999}. Consider the traversal path from the picture above, which stores all the edges and the first occurrence of each vertex: 

\begin{center}
  \textbf{\texttt{[\color{red}$v_1$,1,$v_2$,2,$v_3$,2\color{black},3,\color{blue}$v_4$,4,$v_5$,4,5,$v_6$,5\color{black},3,\color{red}1\color{black}]}}  
\end{center}

Assume we want to split the original tree by removing the edge ``3'' and produce two \ett-s as a result. The edge ``3'' cuts the traversal path into three parts.
Notice that the blue part in the center and the linked red parts on the ends produce the \ett-s we need. Thus, we can store these traversals in a data structure with \texttt{cut} and \texttt{link} operations, so split is performed by two \texttt{cut}-s and one \texttt{link}. For reasons described later, we use Cartesian trees~\cite{cartesian-trees} with implicit keys, where every node has a random priority, and the tree forms a heap according to these priorities. This data structure supports \texttt{cut} and \texttt{link} operations in $\O(\log N)$ expected time. 

This way, \ett-s solve the dynamic connectivity problem for forests (sets of trees) in the following way:
\begin{itemize}
    \item \texttt{connected(u, v)} finds the corresponding Cartesian tree roots starting from the nodes \texttt{u} and \texttt{v} and checks them for equality;
    \item \texttt{add\_edge(u, v)} merges the corresponding Cartesian trees;
    \item \texttt{remove\_edge(u, v)} splits the Cartesian tree in the way described above.
\end{itemize}
All these operations work in $\O(\log N)$ time.

\paragraph{Atomic Merge and Split}
We want to construct a linearizable \ett{} implementation that allows \texttt{add\_adge} and \texttt{remove\_edge} operations to be executed by only one thread and the \texttt{connected} query to be executed by multiple threads in a non-blocking way. The idea is to use the same sequential algorithm, when \texttt{connected(u, v)} finds the roots of \texttt{u} and \texttt{v} in the corresponding Cartesian trees. However, we need to do both searches atomically under concurrent updates. 

The first problem on the way to such an implementation is the appearance of out-of-thin-air components during \ett{} splits and merges. 
Consider an edge addition between \texttt{AB} and \texttt{CD} components on Figure~\ref{merging}. To merge the corresponding \ett{}-s, the Cartesian trees should be cut by the edge endpoints and then linked in the presented way. However, when \texttt{connected(..)} is in the middle of the update and the cuts are finished but the trees are not linked yet, concurrent readers may see non-existing connectivity components, appeared out of nowhere.

\begin{figure}[h]
\centering
\includegraphics[width=0.9\linewidth]{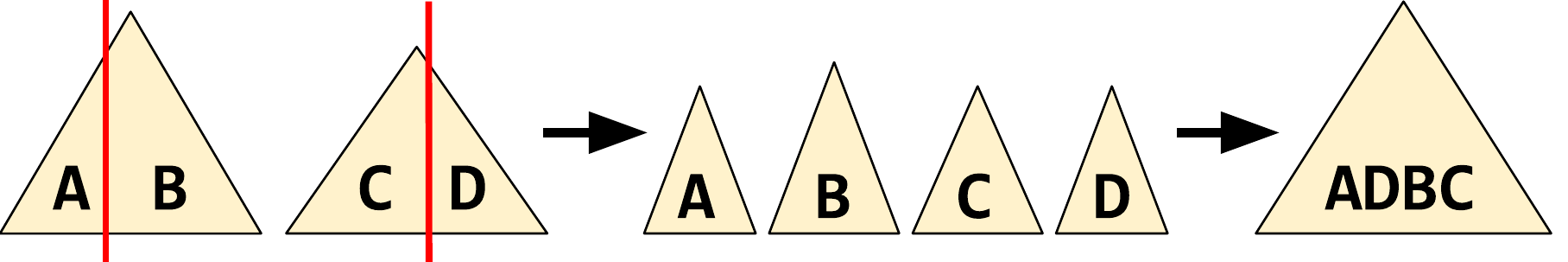}
\vspace{-0.5em}
\caption{Edge addition between \texttt{AB} and \texttt{CD} components. The corresponding Cartesian trees should be cut by the edge endpoints, followed by linking the resulting four parts in a proper order. However, in the center state concurrent readers may see non-existing connectivity components.}
\label{merging}
\end{figure}

In order to solve this problem, we partition \ett{} splits and merges into \emph{logical} and \emph{physical} parts. The logical part consists only of one instruction and will be the linearization point of the operation.

The edge addition (merge) procedure starts with linking one Cartesian tree to another by setting the parent link {---} this way; the trees seem already merged for concurrent readers. It is correct since  for the Cartesian trees we know the node that will become the common root -- the root that has higher priority. Note that parent links always lead to a node with higher priority, so the parent link graph is acyclic and there is only one sink node after the first step. The following cuts keep the parent links, so the tree stays merged during the restructuring. Figure~\ref{linearizable-merging} illustrates this algorithm.


\begin{figure}[h]
\centering
\includegraphics[width=0.68\linewidth]{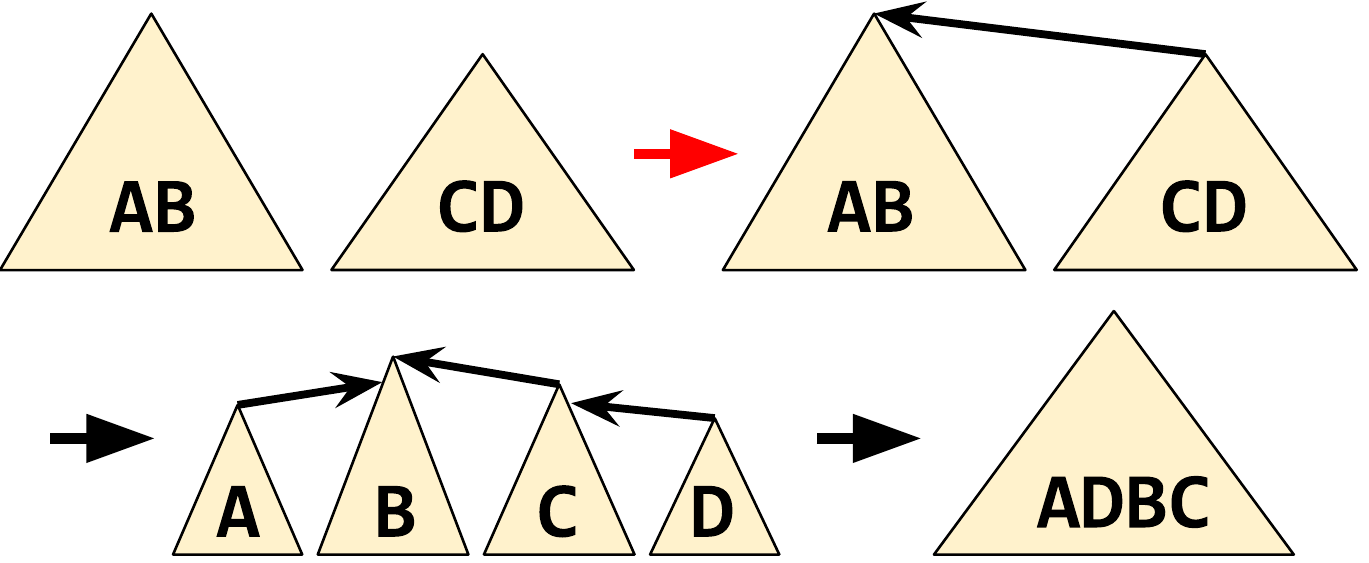}
\caption{\emph{Atomic} edge addition between \texttt{AB} and \texttt{CD} components. At first, the corresponding trees are merged logically. The tree should then be restructured, but the parent links of the subtrees produced by cuts are kept.}
\label{linearizable-merging}
\end{figure}

The edge removal (split) procedure first prepares the tree to be split and then applies the operation by a single parent link change. It means that we at first cut the Cartesian tree into three parts, $A$, $B$, and $C$, but keep the parent link as we did for merging. After that, the first and the third trees, $A$ and $C$, are united. The constructed $AC$ and $B$ trees correspond to the \ett{}-s we need to have at the end. Notice that either the parent of the root of $B$ is a node in $AC$ or vice versa. The corresponding parent link removal splits the \ett{}-s both physically and logically. Figure~\ref{linearizable-splitting} illustrates the described procedure.

\begin{figure}[h]
\centering
\includegraphics[width=0.68\linewidth]{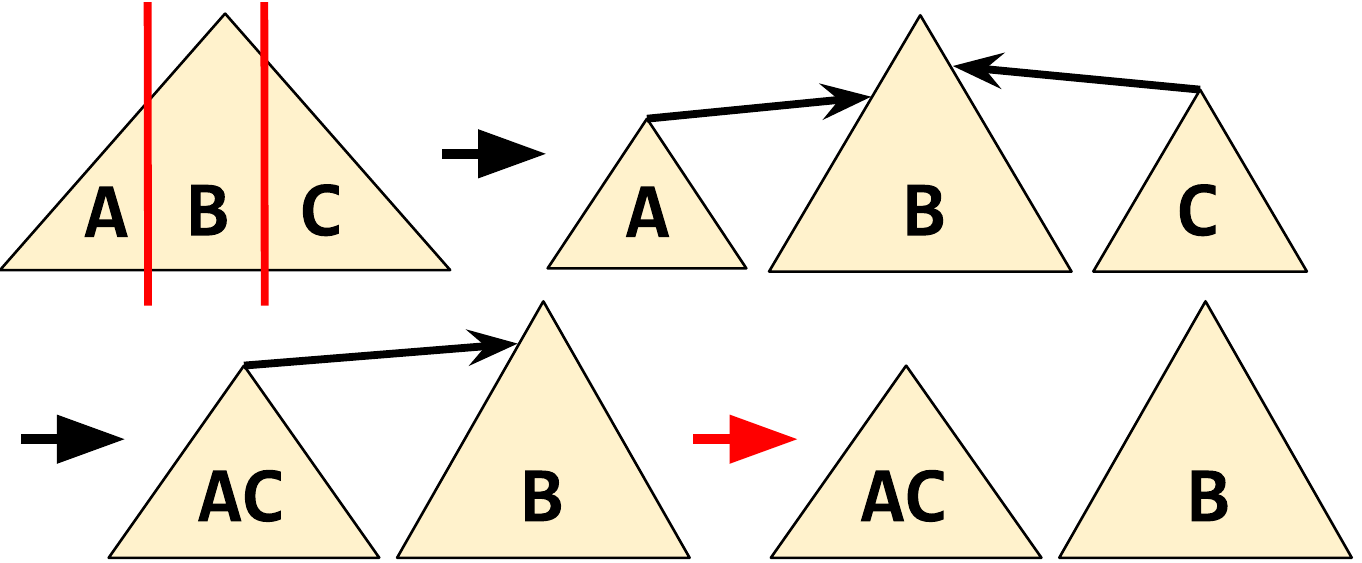}
\caption{\emph{Atomic} edge removal. The algorithm keeps the parent links during the required cuts and merges, and the last unlinking applies the operation both logically and physically.}
\label{linearizable-splitting}
\end{figure}



\paragraph{Linearizable \texttt{connected(u, v)} Operation}
Once the out-of-thin-air problem is solved, we can construct a linearizable algorithm for \texttt{connected(u, v)}. The idea is to maintain versions of the Cartesian tree roots and change them on every split and merge operation. We want to use these versions to check that roots for \texttt{u} and \texttt{v} were snapshotted correctly by re-checking these versions.

Yet, this approach immediately meets a problem {---} it is non-trivial to perform the logical part of the operation and change the root version atomically since these actions change different memory locations.
For instance, when an edge addition changes the parent link that belongs to the root with lower priority, the version increment should be performed on the root with higher priority (see Figure~\ref{linearizable-merging} that illustrates the edge addition procedure).

As a solution, we suggest updating the root version \emph{before} each merge and split procedure; thus, before each edge addition and removal. Since there is a single writer, it is guaranteed that the version is \emph{at most} one step ahead for concurrent readers. However, this relaxation requires a non-trivial check to ensure that the obtained roots were snapshotted atomically. 

If version and parent link updates were atomic, then the connectivity check would find the Cartesian tree roots for both $u$ and $v$, and then find the root for $u$ again. When both roots and the corresponding versions found for $u$ coincide, it is guaranteed that the snapshot is correct, and the connectivity check should simply compare the obtained roots.

With the versions that can be one step ahead, we may need more re-checks. When roots for $u$ and $v$ coincide and the second root search for $u$ returns the same node and version, there was no change between the first root search for $u$ and the root search for $v$ that affected the connectivity check. 

However, when the obtained roots for $u$ and $v$ are different, and the second root search for $u$ returns the same node and version, we cannot be sure that $u$ and $v$ are in different components.
Despite this observation, an additional pair of checks that the roots of $u$ and $v$ and their versions has not been changed solves the problem. The corresponding proof is discussed later in this section. In addition, we show that the algorithm is incorrect without the last root search and the corresponding re-check in Appendix~\ref{appendix:ett_counterexample}.

The described algorithm is presented in Listing~\ref{nonblocking-reads}.

\begin{figureAsListing}
  \begin{lstlisting}
func connected(u, v: Vertex): Bool = while (true) {
  // Get the roots of `u` and `v` with the versions
  (u_root, u_version) := find_root(u)
  (v_root, v_version) := find_root(v)
   // Has the component of `u` been changed?
  if (u_root, u_version) != find_root(u): continue
  if u_root != v_root: 
  #\indentrule#  // `u` and `v` are likely to be in different 
  #\indentrule#  // components, re-check that `u_root` and
  #\indentrule#  // `v_root` were snapshotted atomically
  #\indentrule#  if (v_root, v_version) != find_root(v): continue    
  #\indentrule#  if (u_root, u_version) != find_root(u): continue#\label{line:connected:last_search}#
  return u_root == v_root
} 
// Returns the root with the current version
func find_root(u: Vertex): (Vertex, Int) {
  cur := u
  while (true):
  #\indentrule#  parent := cur.parent // read the parent of `cur`
  #\indentrule#  if parent == null: break // is `cur` the root?
  #\indentrule#  cur = parent // go towards the root
  // Return the found root with the current version
  return (cur, cur.version)
}
  \end{lstlisting}

  \caption{Linearizable connectivity check. \texttt{find\_root} follows parent links until it finds the root and its version.}
  \label{nonblocking-reads}
\end{figureAsListing}

\paragraph{Proof of Correctness}
To prove the correctness of the \texttt{connected(u, v)} algorithm we first prove the following lemma:
\begin{lemma}
Assume that \texttt{find\_root} returned $(r, v1)$ for some vertex and another \texttt{find\_root} returned $(r, v2)$. Consider the component of connectivity with vertex $r$. If a modification in it started after the first \texttt{find\_root} completion and finished before the root is found in the second \texttt{find\_root} invocation, then the root version changed, i.e. $v1 \neq v2$. By ``root finding'' we mean the moment when \texttt{find\_root} reads \texttt{cur.parent} that equals \texttt{null}.
\label{lemma-reads}
\end{lemma}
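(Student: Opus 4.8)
The plan is to track the state of the relevant Cartesian tree (the one containing $r$) across the interval between the two root-findings, and argue that the hypothesized modification forces a version bump on $r$. The key observation I would exploit is the single-writer assumption: all modifications to the structure are totally ordered, and the writer always increments the root version \emph{before} performing the physical cuts and links of a merge or split. So I would set up the following timeline: let $t_1$ be the moment the first \texttt{find\_root} reads \texttt{cur.parent} $=$ \texttt{null} (so at $t_1$, $r$ is a root with version $v1$), and let $t_2$ be the analogous moment in the second invocation (at $t_2$, $r$ is again a root with version $v2$). By hypothesis there is a modification $m$ in the component of $r$ that begins after the first \texttt{find\_root} \emph{completes} (hence after $t_1$) and finishes before $t_2$.

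\textbf{Main argument.} I would split on whether $m$ is a merge or a split, and within that, on whether $r$ is the surviving root, the absorbed root, or some other node. First I would argue that at the start of $m$, $r$ must still be the root of its tree: since modifications are serialized and $m$ is the first one touching this component after $t_1$, nothing changed $r$'s parent pointer between $t_1$ and the start of $m$, so $r$ is a root when $m$ begins. Now, $m$ is an operation on the component of $r$, so it involves the Cartesian tree rooted at $r$ (a merge links it to another tree, a split cuts within it). In every such case the writer's protocol dictates a version increment on the root(s) of the tree(s) involved in $m$ \emph{before} the physical restructuring --- in particular the version of $r$ gets incremented (if $r$ is absorbed into another tree during a merge, $r$'s own version is still stepped by the ``update the root version before each merge/split'' rule applied to $r$'s tree; if $r$ survives, likewise). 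So immediately after $m$ begins, $r$'s version is $v1 + 1$ (or more, if later modifications also touch it). Since no modification ever \emph{decreases} a version, and at $t_2$ we read version $v2$ on the node $r$, we get $v2 \geq v1 + 1 > v1$, hence $v1 \neq v2$. One subtlety I would handle carefully: between the start of $m$ and $t_2$, additional modifications may occur, and $r$ may cease to be a root and later become a root again; I would note that the lemma only claims the versions differ, not that $v2 = v1+1$, so monotonicity of versions suffices once the first increment is established.

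\textbf{The main obstacle} I anticipate is pinning down precisely what ``a modification in the component of $r$'' touches, and making sure that in \emph{every} sub-case the writer's protocol guarantees an increment specifically on the node $r$ (not merely on ``the current root''). In particular, the merge procedure increments the version of the root of \emph{each} of the two trees being merged before linking them, and the split procedure increments the version of the root of the tree being split; I would need to verify from the algorithm description (Figures~\ref{linearizable-merging} and~\ref{linearizable-splitting}, and the surrounding text) that when $m$ begins, $r$ is the root of one of the trees $m$ operates on, so its version is among those bumped. The case where $r$ is an interior node when $m$ begins is excluded by the serialization argument above, which is why establishing ``$r$ is a root at the start of $m$'' is the load-bearing first step. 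Once that is in place, the rest is bookkeeping using (i) the single-writer total order on modifications and (ii) monotonicity of the per-node version counters.
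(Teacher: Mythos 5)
Your Case-A argument (the modification operates on a tree whose root is still $r$, hence bumps $r$'s version) matches the first, trivial case of the paper's proof. The genuine gap is your load-bearing first step: the claim that ``at the start of $m$, $r$ must still be the root of its tree.'' The lemma's hypothesis only says that $m$ \emph{started after the first \texttt{find\_root} completed}; it does not say $m$ is the first modification to touch the component after $t_1$. Because the physical restructuring of a merge or split is spread over time, a modification $m_0$ that the single writer began \emph{before} the first \texttt{find\_root} returned (or even before $t_1$) can set \texttt{r.parent} to a non-null value \emph{after} $t_1$ and before $m$ begins. In that execution, when $m$ runs, the component's root is some other node $r'$, and the writer's protocol only guarantees a version bump on the roots involved in $m$ --- not on the interior node $r$. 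So your serialization argument does not exclude the case you need it to exclude, and monotonicity of $r$'s version alone then gives you nothing.

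The missing idea is precisely the paper's second case: when $m$ operates while the root of $r$'s component is some other vertex, you cannot charge the version change to $m$ at all. Instead, observe that the second \texttt{find\_root} returned $r$, so between $m$ and $t_2$ there must be a further modification that makes $r$ a root again; by the protocol (versions of all nodes that are roots before or after an operation are incremented at its beginning, as in the appendix counter-example where both $r$ and $w$ are bumped), that later modification increments $r$'s version, and it does so strictly after $v1$ was read, which yields $v2 \neq v1$. Adding this second branch to your case analysis --- and dropping the unjustified ``$r$ is a root when $m$ begins'' claim --- repairs the proof and brings it in line with the paper's.
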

\begin{proof}
The case when the modification happened while $r$ was still the root of its component is trivial since this modification directly changed the version of $r$.

The only case left is when the root was another vertex. However, since we know that the second \texttt{find\_root} invocation returned $r$, there should be another modification that made $r$ the root again {---} it has to increase the version of $r$ at the beginning.
\end{proof}

\begin{theorem}
\label{reads-linearizable}
The connectivity check algorithm in Listing~\ref{nonblocking-reads} is linearizable.
\end{theorem}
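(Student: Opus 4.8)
The plan is to assign to every completed operation a linearization point inside its execution interval and then check that the induced sequential history is legal. For an \texttt{add\_edge} or a \texttt{remove\_edge} this point is already pinned down by the atomic-merge and atomic-split constructions above: each such operation contains a single parent-link write that flips the partition of vertices into connected components, and I take that write as its linearization point. Because the structure is single-writer, these points are totally ordered, so the ``logical partition'' $P(t)$ --- the partition induced by the parent-link writes that have logically taken effect by time $t$ --- is a well-defined step function that changes only at those instants and reproduces exactly the sequential effect of the modifications. Read-only \texttt{connected} calls commute, so ties among them may be broken arbitrarily; the whole argument thus reduces to placing each returning \texttt{connected(u,v)} at a time $t^\star$ in its interval with $P(t^\star)$ consistent with the answer. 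Below, $t_i$ denotes the instant at which the $i$-th \texttt{find\_root} call of the successful iteration reads \texttt{cur.parent == null}; by the control flow of Listing~\ref{nonblocking-reads} these instants are strictly increasing and all lie between the invocation and the response of the call.

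Suppose the call returns \texttt{true}. Then its iteration ran \texttt{find\_root(u)}, \texttt{find\_root(v)}, \texttt{find\_root(u)}, all three agreed, the first and third returning $(r,v_1)$ and the middle one returning $(r,v_2)$ with the \emph{same} root $r$. I linearize at $t_2$. Since the first and third \texttt{find\_root(u)} returned the same pair $(r,v_1)$, Lemma~\ref{lemma-reads} forbids any modification of $r$'s component that starts after the first of them completes and finishes before $t_3$; as $t_1 < t_2 < t_3$, the component of $u$ --- which equals the component of $r$ at both $t_1$ and $t_3$ --- cannot have changed across $t_2$, so at $t_2$ the vertices $u$ and $v$ are both rooted at $r$ and $P(t_2)$ puts them in one component, matching the answer.

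Suppose instead the call returns \texttt{false}. Then it also entered the branch \texttt{if u\_root != v\_root} and passed the re-check on \texttt{find\_root(v)} and the re-check on \texttt{find\_root(u)} of line~\ref{line:connected:last_search}. Hence \texttt{find\_root(u)} returned the same $(r_u,v_u)$ at $t_1,t_3,t_5$, \texttt{find\_root(v)} returned the same $(r_v,v_v)$ at $t_2,t_4$, and $r_u\neq r_v$. I linearize at any $t^\star\in[t_3,t_4]$: this instant is straddled by the two \texttt{find\_root(u)} readings at $t_3$ and $t_5$, so Lemma~\ref{lemma-reads} gives that $u$ is rooted at $r_u$ at $t^\star$, and it is straddled by the two \texttt{find\_root(v)} readings at $t_2$ and $t_4$, giving that $v$ is rooted at $r_v$ at $t^\star$; since $r_u\neq r_v$, $P(t^\star)$ separates $u$ and $v$. (The appendix shows the algorithm is incorrect once the re-check of line~\ref{line:connected:last_search} is dropped, so the reading at $t_5$ is genuinely needed to validate this linearization.)

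The step I expect to be the real obstacle is turning ``the matched \texttt{(root, version)} pair straddles $t^\star$'' into ``the relevant component is genuinely unchanged at $t^\star$'', and in the \texttt{false} case into a \emph{single} instant that works for both sides at once. Because the version is only guaranteed to be at most one step ahead of the logical partition, equality of a \texttt{(root, version)} pair at two times does not by itself certify that the component was untouched over the open interval between the two \texttt{find\_root} calls --- a modification may be in flight, its version bump already visible while its parent-link flip is not yet done (or done, but its restructuring still racing the \texttt{find\_root} walk). So the careful part is to argue, using the single-writer assumption together with Lemma~\ref{lemma-reads}, about the windows ``[completion of one \texttt{find\_root}, root-read of the next]'': that no modification of the relevant component can both start and finish strictly inside such a window, and that the windows arising from $t_1,\dots,t_5$ together cover a $t^\star$ at which both root readings are simultaneously valid and $P(t^\star)$ is already the state those readings report. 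Once that is in place, the remaining obligations --- that every $t^\star$ lies in its operation's interval, that pending \texttt{connected} calls that never return need no point, and that modifications act on $P$ exactly as in the sequential algorithm --- are routine.
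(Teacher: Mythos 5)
Your overall strategy (explicitly constructing a linearization point for each returning \texttt{connected} call and reading the answer off the logical partition there) differs from the paper's, which argues by contradiction: it assumes the returned answer is wrong at \emph{every} instant of the execution interval, counts the modifications that this assumption forces (two in the \texttt{true} case, four in the \texttt{false} case), and locates one of them entirely inside a window straddled by two \texttt{find\_root} calls on the same vertex that returned identical \texttt{(root, version)} pairs --- contradicting Lemma~\ref{lemma-reads}. That framing matters, because it never has to establish what the component structure actually is at any particular instant; it only has to show that \emph{some} instant with the right connectivity exists, which is the negation of the assumption.

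Your version, by contrast, commits to a concrete instant $t^\star$ and must then prove that $u$ is rooted at $r_u$ and $v$ at $r_v$ (or both at $r$) \emph{at that instant}. The step you invoke for this --- ``Lemma~\ref{lemma-reads} gives that $u$ is rooted at $r_u$ at $t^\star$'' --- is not what the lemma says. Its contrapositive only rules out a modification of $r$'s component that both starts after the first \texttt{find\_root} completes and finishes before the second one reads the null parent; it says nothing about a modification already in flight when the window opens or still in flight when it closes, and during such a modification a \texttt{find\_root} walk can traverse stale parent links and report a root that is not the current one (this is exactly the subtlety the versioning scheme and the counterexample in Appendix~\ref{appendix:ett_counterexample} are about). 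You identify this yourself in your closing paragraph as ``the real obstacle'' and leave it open, so as written the proof has a genuine gap precisely at its load-bearing step. To close it along your lines you would need a substantially stronger invariant relating matched \texttt{(root, version)} pairs to the logical partition throughout the straddled window; the paper's contradiction-plus-counting argument is the cheaper route because it reduces everything to the lemma as actually stated.
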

\begin{proof} Proof by contradiction. There can be only two reasons why the connectivity check is not linearizable: either there was no point in time when $u$ and $v$ were in the same connected component, and the check returned \texttt{true}, or there was no point in time when $u$ and $v$ were in different components, and the check returned \texttt{false}.

\emph{Case 1: \texttt{connected(u, v)} returned \texttt{true}.} 
Assume that there was no point in time when $u$ and $v$ were in the same component. To succeed, all three \texttt{find\_root} invocations should find the same root $r$ with the same version.

Consider the connectivity of node $r$ at the moment immediately after the first root search for $u$. According to the assumptions, at this moment $r$ was either connected to only one of the nodes or not connected to both. In the first case, in order to make another node connected with $r$ and return it in the \texttt{find\_root} invocation, there should be at least one edge removal to split the connected node and $r$ and one edge addition to make a \texttt{find\_root} invocation for another, disconnected node return $r$. The second case is similar: there should be at least two edge additions in the same component of connectivity, the last one of which extends the component of connectivity with $r$.
As a result, in both cases there were two modifications, and the second one started after the first \texttt{find\_root} invocation. This way, there is a modification between the completion of the first root search for $u$ and the second one, both returned the same root $r$. 
Following the Lemma~\ref{lemma-reads}, the obtained versions should be different. Contradiction.

\emph{Case 2: \texttt{connected(u, v)} returned \texttt{false}.}
Assume that during this \texttt{connected(u, v)} invocation $u$ and $v$ were always in the same component of connectivity. 
In this case, the following history should be obtained:
\begin{center}
\vspace{-1em}
    \texttt{find\_root(u): $r_u,\ V_u$}
    
    \texttt{find\_root(v): $r_v,\ V_v$}
    
    \texttt{find\_root(u): $r_u,\ V_u$}
    
    \texttt{find\_root(v): $r_v,\ V_v$}
    
    \texttt{find\_root(u): $r_u,\ V_u$}
\vspace{-0.7em}
\end{center}

Remember that according to the assumption $u$ and $v$ have always been in the same component. Therefore, since four root changes of the same component have been detected, there were four corresponding modifications. The third of these modifications started after the first \texttt{find\_root} invocation and influenced the result of the fourth, thus we can use it to apply the Lemma~\ref{lemma-reads} to the first and the last \texttt{find\_root(u)} invocations and get a contradiction.


\end{proof}

\paragraph{Other Balanced Trees}
We used the Cartesian trees as an underlying data structure in \ett-s because parent links in the Cartesian trees naturally form an acyclic graph at any moment of time and roots do not change due to rebalancing. Otherwise, concurrent readers could have looped endlessly or could have wrongly detected a component change during a modification. In fact, our single-writer multiple-reader implementation of the Cartesian trees is barely changed from the sequential implementation. The main issue with the Cartesian trees is that their time complexity is \emph{expected} due to randomization. So, we need to show how to employ any \emph{deterministic} balanced tree instead, for example, the B-tree~\cite{Btree}. 

In the B-tree each node contains a number of keys rather than a single key. In our case, a node can be viewed as a container of some vertices and edges from graph $G$. As a consequence, to allow root searches we also should store links from vertices to nodes they are contained in. The ideas for atomic merges and splits, and for linearizable queries are the same -- in a logical part of modifications one tree is linked/unlinked to/from another tree, each root has a version that changes before a modification. The only problem is that the B-tree can allocate a new node and make it the root during a merge or replace the root during a split, and we want roots not to change unless in the logica   l part. The solution is straightforward -- if the B-tree replaces the root with another node, we can instead swap the content of the root and the content of this node. Note that the structure of the tree remains the same, but concurrent readers now can not see odd root changes.

The number of keys in each B-tree node can be chosen arbitrarily; a B-tree with $\Theta(k)$ keys in each node supports \texttt{cut} and \texttt{link} operations in $\O(k \log_{k} N)$ time, and \texttt{find\_root} in $\O(\log_k N)$. As a result, if we choose constant $k$, we will get the same, but \emph{deterministic} asymptotic time complexities as for the Cartesian tree. What is more, $\Theta(\log n)$-ary trees can improve read time complexity for the general dynamic connectivity problem to $\O(\frac {\log N} {\log \log N})$ as shown by Henzinger and King~\cite{Henzinger1999}. Besides, nodes in other balanced binary trees can be viewed as containers of size $1$, and thus, we can adapt them in the same way as the B-tree by introducing this additional level of indirection.

\section{Concurrent Dynamic Connectivity}
We design a concurrent generalization for the classical sequential dynamic connectivity algorithm proposed by Holm et al.~\cite{Holm2001} for several reasons. First of all, this algorithm can be improved to the best known deterministic algorithm~\cite{Wulff-Nilsen2013} by supplementing it with the \emph{lazy local trees} and \emph{shortcut structure}. Given that this algorithm is already complex, it makes sense to start with making a generalization for it. Secondly, it uses the same ideas as the randomized algorithm with amortized time complexity. Moreover, we prioritize average time over worst-case time as it is a better metric for throughput.

\subsection{The Sequential Algorithm via \ett{}-s}
The classic algorithm that solves the dynamic connectivity problem uses Euler Tour Trees to represent different components of connectivity. The idea is to maintain a spanning forest $F$ of graph $G$, so most of the dynamic connectivity operations can be straightforwardly reduced to the \ett{}-s:

\begin{itemize}
    \item \texttt{add\_edge(u, v)} either merges two \ett{}-s if it connects different components, or adds $(u, v)$ to the set of non-spanning edges. If the edge $(u, v)$ connects different components of connectivity, then it should be added to the spanning forest, i.e. Euler Tour Trees.
    \item \texttt{remove\_edge(u, v)} either removes $(u, v)$ from the set of non-spanning edges or search for a \emph{replacement} edge -- an edge that reconnects the spanning trees after the spanning $(u, v)$ edge is removed, reconstructing the \ett{} if a replacement is found or splitting it otherwise.
    \item \texttt{connected(u, v)} finds the roots in the Cartesian trees of  vertices $u$ and $v$ and compares them.
\end{itemize}


\paragraph{The Level Structure}
Note that a full scan of all non-spanning edges in \texttt{remove\_edge} leads to a linear time complexity. To solve this problem, Holm et al. introduced a special \emph{level structure} for storing non-spanning edges to search for replacement efficiently.

Each edge $e$ is assigned with an integer non-negative level $l(e) \leq l_{max} = \lfloor \log_2 n \rfloor$. Edges of level $\geq i$ form the subgraph $G_i$. In particular, $G = G_0 \supseteq G_1 \supseteq \ldots \supseteq G_{l_{max}}$. In each subgraph $G_i$ a spanning forest $F_i$ is maintained with Euler Tour Trees. 

Furthermore, the level structure supports the following invariants:
\begin{itemize}
    \item $F = F_0 \supseteq F_1 \supseteq \ldots \supseteq F_{l_{max}}$. Intuitively, this means that to get the spanning forest of $G_i$ we can take $F$ and omit edges of the levels lower than $i$.
    \item $F$ is a maximal spanning tree for edge weights defined as their levels.
    \item Every component of connectivity in $G_i$ has size at most $n / {2^i}$. As a result, edges cannot have levels greater than $\lfloor \log_2 n \rfloor$.
\end{itemize}
The main idea of how to obtain better amortized operation time is to increase the level of every edge processed during a replacement edge search if it can not be a replacement. As the maximum level is bounded with $\lfloor \log_2 n \rfloor$, the total number of level increases is at most $m \log_2 N$, where $m$ is the number of inserted edges.

The \texttt{connected(u, v)} and \texttt{add\_edge(u, v)} operations in this scheme work with the lowest level $G_0 = G$ and are not changed relatively to the general description above. The information about non-spanning edges (the edges that do not belong to the spanning forest) on the level $l$ is stored in adjacent nodes of the level $l$. Specifically,  each node of the Cartesian tree has a set of non-spanning edges. The \texttt{remove\_edge(u, v)} operation when removing a non-spanning tree of level $l$ just deletes the information about it in the corresponding nodes of $F_l$.

Upon a spanning edge of level $l$ removal, a replacement edge is first searched among the edges of level $l$. For this purpose, spanning tree $T$ is split into trees $A$ and $B$ that will appear if there is no replacement edge. Without loss of generality, we can assume that the size of $A$ is not greater than the size of $B$. Then, all spanning edges on the level $l$ in $A$ increase their levels by $1$. It does not violate the invariant on the sizes of connected components because $A$ is at least twice less than the component of connectivity $T$ and the size of $T$ was at most ${n} / {2^l}$. After that, the algorithm iterates over all non-spanning edges of level $l$. Those edges that can not be a replacement increase their levels as well. They still will be non-spanning in $G_{l+1}$ due to the previously mentioned spanning edge rises. If an edge can be a replacement, then the search is finished and the edge is added to the spanning forest $F_i$ for every $i \leq l$. If all edges are covered and no replacement is found, the spanning tree $T$ stays split into $A$ and $B$, and the search continues in the lower levels, until $l \geq 0$. 

To iterate over all spanning or non-spanning edges of the current level effectively, each node of the Cartesian tree has flags whether there are such edges in its subtree. This way, a new edge on the current level can be found in $\O(\log N)$ time by descending from the root to children with such edges.

\subsection{Lock-Free Connectivity Checks}
We start constructing our scalable concurrent algorithm for dynamic connectivity iteratively. At first, assume that all edge additions and removals are ordered by a coarse-grained lock or using the flat combining technique.

However, with the lock-free single-writer concurrent \ett{} algorithm presented in Section~\ref{sec:ett}, we immediately get a non-blocking connectivity check for the dynamic connectivity problem by replacing a sequential \ett{} for the lowest level $G_0$ with the concurrent one. This replacement is correct since connectivity queries work only with the lowest level of the level structure, while a single writer, in fact, performs all modifications. Besides, if a $\Theta(\log n)$-ary balanced tree is employed for the lowest level, query time complexity can be improved to $\O(\frac {\log n} {\log \log n})$ without affecting average modification time~\cite{Henzinger1999}.

\subsection{Fine-Grained Locking for Updates}
Intuitively, modifications of distinctive components can be naturally parallelized.
That leads to an idea to replace the global lock with fine-grained per-component locking.
Nonetheless, a problem has to be solved:  the components of connectivity change dynamically, so we need a representative of each component that can hold the lock.


Similarly to the non-blocking reads, connectivity components can be represented by the Cartesian tree roots of the \ett{}-s of $G_0$. 
This way, an update operation can obtain the roots via \findRoot{} invocations and take the corresponding locks. 
Once the locks are acquired, we have to check that the found nodes are still roots, i.e the operation locked \emph{some} components. After that, we repeat the root searches for both $u$ and $v$. 
Since after locking the component of the root can not be modified, when \findRoot{} returns the same root, it is guaranteed that the correct representative is acquired, and no operation can modify the corresponding component while we hold the lock.

Listing~\ref{lst:fg} presents a pseudo-code of the algorithm that locks the required components. The rest of the design stays the same as for the coarse-grained locking.



\begin{figureAsListing}
  \begin{lstlisting}
func lock_components(u, v: Vertex) = while (true) {
  // Get the roots of `u` and `v` with the versions
  (u_root, _) := find_root(u)
  (v_root, _) := find_root(v)
  // Always have the same lock ordering
  if u_root < v_root: swap(u_root, v_root)
  lock(u_root), lock(v_root) 
  // Re-check that the correct locks were acquired
  if u_root.parent != null or
     v_root.parent != null or
     u_root != find_root(u).root or
     v_root != find_root(v).root:
  #\indentrule#  unlock(u_root), unlock(v_root) // unlock 
  #\indentrule#  continue                       // and restart
  return // the locks were taken correctly
}    
  \end{lstlisting}
  \caption{The per-component fine-grained synchronization for update operations that works with vertices $u$ and $v$. The Cartesian tree roots, returned by \findRoot{}, represent the components.}
  \label{lst:fg}
\end{figureAsListing}

\subsection{Lock-Free Non-Spanning Edge Updates}
Given that at most $N - 1$ edges can belong to a spanning forest and there can be up to $\O(N^2)$ different edges in a graph, most modifications in dense graphs do not change the spanning forest. For example, in the real-world hyperlink and physical connections between routers graphs, the ratio of edges over vertices often exceeds $10$~\cite{realgraphdensity}. Therefore, about $90\%$ edges of the graph are non-spanning. The optimization proposed in this subsection performs non-spanning edge additions and removals without taking locks.

\begin{figureAsListing}
  \begin{lstlisting}
#\label{line:class-edge}#class Edge(u: Vertex, v: Vertex, status: Status)
class SpanningEdgeRemoveOp(edge: Edge) {
  replacement: Edge? = null // `DONE` when done
}  

#\label{line:multiset}#val non_spanning_edges = ConcurrentMultiSet<Edge>()
#\label{line:remove-op}#var remove_op: SpanningEdgeRemoveOp? = null

#\label{line:remove-edge-start}#func remove_edge(edge: Edge) = while(true) {
  when edge.status {
  #\indentrule#  NON-SPANNING -> {
  #\indentrule#  #\indentrule#  // Update the edge status, re-start on failure
  #\label{line:remove-logical}##\indentrule#  #\indentrule#  if !CAS(&edge.status, NON-SPANNING, REMOVED):
  #\indentrule#  #\indentrule#  #\indentrule#  continue
  #\indentrule#  #\indentrule#  // Remove the edge from the set of non-spanning
  #\label{line:remove-physical}##\indentrule#  #\indentrule#  non_spanning_edges.remove(edge)
  #\indentrule#  #\indentrule#  return
  #\indentrule#  }
  #\indentrule#  SPANNING -> {
  #\indentrule#  #\indentrule#  // Take the lock and remove the edge
  #\label{line:blocking-remove-inv}##\indentrule#  #\indentrule#  blocking_remove_edge(edge)
  #\indentrule#  #\indentrule#  return
  #\indentrule#  }
  #\label{line:removed-return}##\indentrule#  REMOVED -> return // already removed
  #\indentrule#  INITIAL -> return // not added yet #\label{line:nonspanning:initial-return}#
  #\indentrule#}
#\label{line:remove-edge-finish}#}

func add_edge(edge: Edge) = while(true) {
  #\label{line:add-connectivity-check}#if connected(edge.u, edge.v):
  #\indentrule#  // Take the lock and add the edge
  #\label{line:add-blocking-inv}##\indentrule#  blocking_add_edge(edge)
  #\indentrule#  return
  // Try to add a non-spanning edge; status is INITIAL 
  #\label{line:add-optimistical}#non_spanning_edges.add(edge) // add to the multi-set
  // Is there a concurrent remove for which 
  // the edge can be a replacement?
  #\label{line:add-notice-start}#cur_remove_op := remove_op
  #\label{line:start}#if cur_remove_op != null and
    #\label{line:add-notice-end}##\label{line:can-be-repl}# cur_remove_op.can_be_replacement(edge):
  #\label{line:add-try-set-repl}##\indentrule#  if CAS(&cur_remove_op.replacement, null, edge):
  #\indentrule#  #\indentrule#  // The edge became the replacement
  #\label{line:add-multiset-remove}##\indentrule#  #\indentrule#  non_spanning_edges.remove(edge)
  #\label{line:add-status-update}##\indentrule#  #\indentrule#  CAS(&edge.status, INITIAL, SPANNING)
  #\indentrule#  #\indentrule#  return
  #\label{line:add-remove-complete}##\indentrule#  else if cur_remove_op.replacement == DONE:
  #\label{line:add-remove-complete-start}##\indentrule#  #\indentrule#  non_spanning_edges.remove(edge)
  #\indentrule#  #\indentrule#  // The edge is about to become spanning,
  #\indentrule#  #\indentrule#  // perform the addition under the lock
  #\label{line:add-remove-complete-end}##\label{line:end}##\indentrule#  #\indentrule#  blocking_add_edge(edge)
  #\indentrule#  #\indentrule#  return
  // Try to make the edge non-spanning and
  // check that its endpoints are in the same component
  #\label{line:add-recheck-start}#if !connected(edge.u, edge.v): 
  #\label{line:add-recheck-end}##\indentrule#  non_spanning_edges.remove(edge), continue
  #\label{line:add-final-update}#if CAS(&edge.status, INITIAL, NON-SPANNING):
  #\indentrule#  return // success
  else:
  #\label{line:add-failure-start}##\indentrule#  // A concurrent writer uses the edge as a
  #\indentrule#  // replacement and moved the status to SPANNING
  #\label{line:add-failure-end}##\indentrule#  non_spanning_edges.remove(edge)
  #\indentrule#  return
}    
  \end{lstlisting}
  \caption{Lock-free algorithm for non-spanning edge additions and removals in the simplified case, when all spanning edge updates use coarse-grained locking, non-spanning edges do not organize the level structure and are stored in a single multi-set. In addition, in this code we assume that threads cannot add the same edge concurrently.}
  \label{lst:non_spanning_simple}
\end{figureAsListing}

\paragraph{High-Level Idea}
The main idea is to assign statuses to the edges and process the ones that are marked as non-spanning in a non-blocking way. Thus, each update operation checks whether the edge is spanning or should become one and processes it under the corresponding locks in this case. Otherwise, the edge is non-spanning, and the operation can be performed in a non-blocking way, re-starting from the beginning if some intermediate check or atomic update fails. The intuition is that such non-blocking operations linearize at the points of status updates.

The principal challenge of non-blocking edge addition lies in a possible conflict of this operation with a concurrent spanning edge removal. If a spanning edge removal searches for replacement and does not find it, while in parallel a non-spanning edge that can be the replacement is added, the data structure state becomes non-linearizable after the component of connectivity split by the removal operation. The conflict indicates the need for an additional channel of communication between spanning edge removals and non-spanning edge additions. To create such a channel, the removal operation at the beginning publishes information about itself together with an initially empty slot for a replacement edge. This allows a concurrent non-spanning addition to propose its edge as a replacement.

\paragraph{Assumptions}
The whole algorithm for non-blocking modifications does not fit into space requirements and contains a ton of technical details, so we present a simplified solution that relies on a few assumptions and shows the main idea of the algorithm. At first, we assume that threads cannot add the same edge concurrently. 
Besides, we offer a solution that uses coarse-grained locking for updates and does not leverage the level structure from the Holm et al. algorithm {---} we maintain a global set of non-spanning edges instead. We consider applying the presenting approach to the fine-grained locking and the Holm et al. algorithm as a technical detail. However, we fix these assumptions in our implementation, which pseudo-code is presented in Appendix~\ref{appendix:non-spanning}. Note that we also do not have these assumptions in our experiments.

\paragraph{Edge Statuses}
Under the discussed assumptions, we need four statuses, the state machine of which is presented in Figure~\ref{status-diagram}:
\begin{itemize}
    \item \texttt{INITIAL} -- the initial status assigned to every new edge;
    \item \texttt{SPANNING} -- the edge is in the spanning forest, so its removal should be executed under the lock;
    \item \texttt{NON-SPANNING} -- the edge is not in the spanning forest, so its removal can be non-blocking;
    \item \texttt{REMOVED} -- the edge is removed.
\end{itemize}
In the complete solution, an additional \texttt{ADDING-SPANNING} status is required to synchronize concurrent additions of the same edge. Otherwise, an addition operation can not distinguish cases when an edge was already added to the spanning forest and when it is being added by a concurrent addition, and thus, can not decide whether it should synchronize with a parallel thread to be sure that the edge is added.

\begin{figure}
\centering
\includegraphics[width=0.7\linewidth]{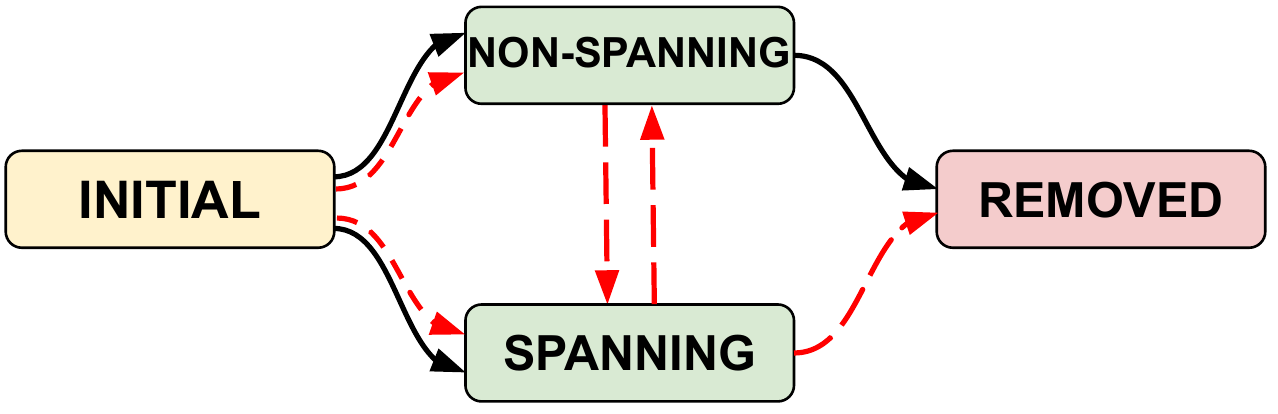}
\caption{Possible transitions between edge statuses. Red dashed transitions are performed under the lock.}
\label{status-diagram}
\end{figure}

\paragraph{The Data Structure}
Listing~\ref{lst:non_spanning_simple} presents an algorithm for non-blocking non-spanning edge processing. For simplicity, each edge is wrapped with a special \texttt{Edge} class that also stores the status from Figure~\ref{status-diagram} (line~\ref{line:class-edge}). In a practical implementation, one may use a concurrent hash table to store the statuses. What is more, we have \texttt{REMOVED} status logically, but in practice, it can correspond to an absence of a value to avoid memory leaks. 

Besides, each spanning edge removal operation publishes it to a special global \texttt{remove\_op} field (line~\ref{line:remove-op}), so a concurrent edge addition can propose its edge as an replacement to restructure the connectivity component {---} we use \texttt{SpanningEdgeRemoveOp} descriptor for this purpose.
Finally, all non-spanning edges are stored in a multi-set \texttt{non\_spanning\_edges} (line~\ref{line:multiset}). For fine-grained locking over components of connectivity, \texttt{remove\_op} field and the multi-set should be stored in nodes of the Cartesian trees instead.

\paragraph{The \removeEdge{} Operation}
The edge removal procedure (lines~\ref{line:remove-edge-start}--\ref{line:remove-edge-finish}) is straightforward. When the status is already \texttt{REMOVED}, then the operation can complete immediately {---} the edge is already removed (line~\ref{line:removed-return}). Similarly, when the status is \texttt{INITIAL}, the edge is not added yet, so we can linearize this removal before its addition and finish (line~\ref{line:nonspanning:initial-return}). In the case of the \texttt{SPANNING} status, the corresponding \ett{} should be modified, so the operation takes the lock and performs this removal as ``writer'' (line~\ref{line:blocking-remove-inv}). In the last case, the status is \texttt{NON\_SPANNING}. Thus, changing it atomically to \texttt{REMOVED} \emph{logically} removes the edge (line~\ref{line:remove-logical}). After that, we delete it \emph{physically} from \texttt{non\_spanning\_edges} (line~\ref{line:remove-physical}).

The replacement search, while iterating over the multi-set of non-spanning edges, can see some of them in the \texttt{INITIAL} status. For correctness purposes, it should help to add such edges using an algorithm almost identical to the non-spanning edge addition described later. 

\paragraph{The \addEdge{} Operation}
The edge addition procedure starts with a connectivity check (line~\ref{line:add-connectivity-check}) and falls to the blocking mode when the edge connects different components (line~\ref{line:add-blocking-inv}).
Then, the algorithm optimistically adds the edge to \texttt{non\_spanning\_edges} (line~\ref{line:add-optimistical}) {---} it will later remove it from the multi-set if the edge cannot be non-spanning.

Once the edge is added to \texttt{non\_spanning\_edges}, the operation may notice that there is a concurrent spanning edge removal for which this adding edge can be a replacement to reconstruct the Euler Tour Tree (lines~\ref{line:add-notice-start}--\ref{line:add-notice-end}). In this case, we try to set our edge to the \texttt{replacement} field of the removal operation (line~\ref{line:add-try-set-repl}) {---} remember, the edge status is still \texttt{INITIAL}. If the edge is installed to \texttt{replacement}, the operation removes the edge from the multi-set (line~\ref{line:add-multiset-remove}), changes the status to \texttt{SPANNING} (line~\ref{line:add-status-update}) and completes. Note that the status update failure (line~\ref{line:add-status-update}) can be caused only if the concurrent edge removal already processed the edge as replacement and made it spanning (or even removed after that).

When the edge could be a replacement but the remove operation is logically completed (line~\ref{line:add-remove-complete}), then the edge will connect different components. Therefore, we delete it from the multi-set of non-spanning edges and fall back to the blocking mode (lines~\ref{line:add-remove-complete-start}--\ref{line:add-remove-complete-end}).

We successfully examined the case when a concurrent removal operation has been detected. Since the edge is already added to \texttt{non\_spanning\_edges}, if a concurrent removal starts after we read \texttt{null} from \texttt{remove\_op}, it should find our edge. However, we still need to re-check that the edge does not connect different components, removing the edge from \texttt{non\_spanning\_edges} and re-starting the operation in this case (line~\ref{line:add-recheck-start}--\ref{line:add-recheck-end}).

Finally, we can add the edge as a non-spanning by an atomic status update from \texttt{INITIAL} to \texttt{NON\_SPANNING} (line~\ref{line:add-final-update}). If the corresponding \texttt{CAS} succeeds, the edge is successfully added as a non-spanning, so the operation completes. Otherwise, if the \texttt{CAS} fails, a concurrent ``writer'' either has used the edge as a replacement to reconstruct the \ett{} or has helped to add the edge, so one copy of it should be removed from the multi-set of non-spanning edges, and the operation completes (lines~\ref{line:add-failure-start}--\ref{line:add-failure-end}).

\paragraph{Correctness of Non-Spanning Edge Addition}
\begin{theorem}
In the proposed non-blocking addition algorithm (Listing~\ref{lst:non_spanning_simple}), the conflict with a spanning edge removal that leads to a non-spanning edge between different components of connectivity is impossible.
\label{theorem:non-spanning-edge-correctness}
\end{theorem}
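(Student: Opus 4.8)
The goal is to rule out the one dangerous race: a spanning edge removal $R$ that removes edge $e_0$, splits its component into $A \uplus B$, finds no replacement among the non-spanning edges it scans, and yet a concurrently added non-spanning edge $e$ ends up recorded in \texttt{non\_spanning\_edges} with status \texttt{NON-SPANNING} while actually crossing the $A$/$B$ cut. I would prove this by contradiction: assume such an $e$ exists, and track the interleaving of the addition of $e$ (Listing~\ref{lst:non_spanning_simple}, the \addEdge{} branch that reaches line~\ref{line:add-final-update}) against the lifetime of the descriptor \texttt{cur\_remove\_op} published by $R$ on line~\ref{line:remove-op}. The key structural fact is that $R$'s scan over the multi-set observes every edge that is physically present in \texttt{non\_spanning\_edges} throughout the scan (or helps those it finds in status \texttt{INITIAL}), and that an edge crossing the $A$/$B$ cut is, by definition, a valid replacement, i.e.\ \texttt{can\_be\_replacement} returns true for it.

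**Key steps.** First I would fix the linearization point of $R$: the component split becomes visible exactly when the \ett{} unlink happens, and $R$ sets \texttt{replacement} to \texttt{DONE} only after it has failed to find a replacement, i.e.\ after its full scan. So the relevant ordering is: (i) $R$ publishes the descriptor with \texttt{replacement = null}; (ii) $R$ scans; (iii) $R$ sets \texttt{replacement = DONE} and performs the physical split. Second, I would case on when $e$ becomes physically present in \texttt{non\_spanning\_edges} (line~\ref{line:add-optimistical}) relative to (ii). If $e$ is inserted before $R$'s scan passes the point where it would encounter $e$, then $R$ sees $e$ — either in status \texttt{NON-SPANNING}, in which case it picks it as the replacement (contradiction with ``no replacement found''), or in status \texttt{INITIAL}, in which case $R$ helps complete the addition and, since $e$ crosses the cut, installs it as the replacement (again a contradiction). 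If instead $e$ is inserted after $R$'s scan is effectively finished, then by the time the adding thread executes its check of \texttt{remove\_op} on lines~\ref{line:add-notice-start}--\ref{line:add-notice-end} the descriptor is non-null and \texttt{can\_be\_replacement(edge)} holds; the thread then either wins the CAS on line~\ref{line:add-try-set-repl} — but $R$ already scanned past, so this contradicts $R$ having set \texttt{DONE}, meaning the CAS must actually fail — or it reads \texttt{replacement == DONE} on line~\ref{line:add-remove-complete} and falls back to \texttt{blocking\_add\_edge}, so $e$ never acquires status \texttt{NON-SPANNING}. Third, I would close the remaining gap: the case where $e$ is inserted "during" the scan, so that $R$ may or may not have seen it. Here the re-check on lines~\ref{line:add-recheck-start}--\ref{line:add-recheck-end} and the ordering argument combine — if $R$ missed $e$, then $R$'s scan must have passed $e$'s slot before $e$ was inserted, hence $R$'s descriptor-publish and the bulk of its work precede $e$'s \texttt{remove\_op} read with the descriptor still active, funneling us back into the second case; and the atomic final status CAS on line~\ref{line:add-final-update} together with $R$'s use of the same status field to claim replacements ensures the two cannot both "succeed."

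**Main obstacle.** The delicate point is the "during the scan" case and, more precisely, pinning down a clean happens-before statement of the form: \emph{if $R$'s scan does not encounter $e$, then the insertion of $e$ into \texttt{non\_spanning\_edges} is ordered after $R$ has published its descriptor} (so that the adding thread is guaranteed to observe a non-null \texttt{remove\_op} and route through the replacement-proposal logic rather than silently marking $e$ non-spanning). This requires arguing about the semantics of the concurrent multi-set — that a scan which completes without seeing $e$ linearizes before $e$'s insertion — and then chaining that with the program order within \addEdge{} (insert on line~\ref{line:add-optimistical} precedes the \texttt{remove\_op} read on line~\ref{line:add-notice-start}) and within \removeEdge{}/the removal routine (descriptor publish precedes the scan). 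I expect the bulk of the write-up to be this interleaving bookkeeping; the individual CAS-failure arguments (lines~\ref{line:add-status-update}, \ref{line:add-try-set-repl}, \ref{line:add-final-update}) are routine once the global ordering skeleton is in place. I would also explicitly invoke the stated assumptions — no concurrent additions of the same edge, coarse-grained locking for spanning updates, a single global multi-set — to avoid the \texttt{ADDING-SPANNING} complications the paper defers.
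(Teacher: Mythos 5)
Your proposal is correct and follows essentially the same strategy as the paper's proof: reduce by contradiction to the single interesting interleaving (the addition supplies the only potential replacement while the removal holds the lock), then resolve the race by chaining program order {---} descriptor publication precedes the scan, and the multi-set insertion precedes the \texttt{remove\_op} read {---} so that at least one side is guaranteed to observe the other. The one refinement worth adopting from the paper is its choice of pivot: it cases on which of two \emph{atomic} events on the same location happens first (the write of \texttt{remove\_op} versus its read), which makes the dichotomy total and eliminates the ``during the scan'' case you flag as your main obstacle, whereas pivoting on insertion versus the extended scan forces you to invoke the multi-set's linearizability just to collapse that third case back into the same two-way split.
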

\begin{proof}

The conflict could be possible when there are concurrent spanning edge removal and non-blocking edge addition, and the adding edge is the only potential replacement. On the one side, if the non-spanning addition inserts information about the edge before starting the blocking part of the removal operation, then the removal operation will see the edge and help add it, so there will be no conflict. On the other side, if the removal operation finishes before the connectivity check in the algorithm (line~\ref{line:add-final-update}), the check will fail, and again the problem will not arise. As a result, in the only interesting case, the removal operations is holding the lock, while the non-blocking addition algorithm is in lines~\ref{line:start}-\ref{line:end} of Listing~\ref{lst:non_spanning_simple}. As an immediate consequence, \texttt{can\_be\_replacement} in line~\ref{line:can-be-repl} will return correct results if a conflict with a removal is possible.  Consider what happens first: the removal operation publishes itself in the \texttt{remove\_op} or the non-spanning addition operation reads \texttt{remove\_op} (line~\ref{line:add-notice-start}).

\emph{Case 1:} the publication is the first, so the addition will see the removal operation and will suggest the edge as a replacement instead of completion of the addition. If the replacement proposal (\texttt{CAS}) fails, it means that there is another edge in the slot or the removal closed the slot; both situations are correct. 

\emph{Case 2:} the read is the first, consequently, the information of the edge is inserted before the search, because the publication precedes the search and the information insertion precedes the read (line~\ref{line:add-optimistical}). Thus, the search will see the edge and will help to add it if necessary.
\end{proof}

\paragraph{Implementation Details} 
The discussed simplified version can be adapted to the Holm et al. algorithm that we use as a sequential version under the hood. Due to space limit and dramatic complexity increase, we move the pseudo-code of our full algorithm with all the implementation details to Appendix~\ref{appendix:non-spanning}. 

\subsection{Time Complexity Analysis}
We measure the total work made by all threads \emph{excluding the work spent on waiting for a lock} for our full algorithm.
Consider an abstract fully lock-free generalization of the Holm et al. algorithm for the dynamic connectivity problem, e.g., the one implemented via lock-free software transactions. Since a replacement search takes $\Omega(N)$ in the worst case, and other operations can make it fail and re-run, the tight upper bound on the total work can be $\Omega(MN)$, which is substantially worse than $\O(M \log^2 N)$ of the sequential algorithm. This observation motivated us to design spanning edge updates blocking rather than lock-free.

\begin{theorem}
 The total amount of work excluding the work spent on locking is $\O(M \log^2 N + P\cdot M\log N)$, where $P$ -- is the number of processes, $M$ -- is the total number of operations, and $N$ -- is the number of vertices in the graph.
\end{theorem}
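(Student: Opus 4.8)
The plan is to charge the total work to two separate sources: the ``productive'' work that a sequential execution would also perform, and the ``wasted'' work caused by concurrent retries. For the first part I would essentially replay the amortized analysis of Holm et al.: each edge insertion contributes $\O(\log^2 N)$ amortized, because the dominant cost is the replacement search, which is paid for by level increases, and each of the $\O(M)$ inserted edges can rise at most $l_{\max} = \lfloor \log_2 N \rfloor$ times, each rise costing $\O(\log N)$ for the \ett{} \texttt{cut}/\texttt{link} operations; a \connected{} query and a non-spanning update each cost $\O(\log N)$ in the single-writer \ett{}. Summing over all $M$ operations gives the $\O(M \log^2 N)$ term. The key point to stress here is that the switch from a sequential \ett{} to the single-writer \ett{} of Section~\ref{sec:ett} does not change these asymptotics: \texttt{find\_root}, \texttt{cut}, and \texttt{link} keep their $\O(\log N)$ (expected, for Cartesian trees; deterministic for the B-tree variant) bounds, and the version bookkeeping adds only $\O(1)$ per modification.

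The second part accounts for the $P \cdot M \log N$ term, which comes from retries in the non-blocking and fine-grained-locking portions. First I would bound the number of retries of a single operation: in \texttt{connected} (Listing~\ref{nonblocking-reads}), \texttt{lock\_components} (Listing~\ref{lst:fg}), and the non-spanning \texttt{add\_edge}/\texttt{remove\_edge} (Listing~\ref{lst:non_spanning_simple}), a retry of one process is only triggered when some \emph{other} process has made progress on a modification (changed a root version, acquired a component lock, or advanced an edge status). So I would set up a progress measure — e.g.\ the number of completed write operations, or completed status transitions — and argue that each retry of any process can be charged to a distinct ``step'' of some other process's write operation. Since there are $M$ write operations in total and each does $\O(\log N)$ atomic steps that another process could observe, and at most $P$ processes can be simultaneously retrying against the same step, the total number of retries across all processes is $\O(P \cdot M \log N)$ — wait, more carefully: $\O(\log N)$ observable steps per write, times $M$ writes, times at most $P$ concurrent observers per step, each retry then costing $\O(\log N)$ to redo a \texttt{find\_root}; I would be careful about whether this yields $P M \log N$ or $P M \log^2 N$ and argue that a single retry's cost is $\O(\log N)$ (one or two \texttt{find\_root} calls), and each write has $\O(\log N)$ ``linearization-relevant'' steps but only $\O(1)$ per operation actually force a reader/locker to restart, so the count of forced restarts per write is $\O(1)$ — wait, a write may touch many roots during restructuring. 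I would pin this down by observing that only changes to the \emph{root version} (incremented once per merge/split, hence $\O(\log N)$ times per \texttt{remove\_edge} across all levels) or to the \emph{status} ($\O(1)$ per edge) cause restarts, giving $\O(\log N)$ restart-inducing events per write, $P$ observers each, $\O(\log N)$ cost per restart — which would be $P M \log^2 N$; to get the stated $P M \log N$ I expect the intended argument charges $\O(1)$ cost per restarted \texttt{find\_root}-free re-check, or bounds restart-inducing events per write by $\O(1)$ amortized.

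So the main obstacle I anticipate is exactly this accounting of the retry term: making precise (i) how many times a given process can be forced to restart, (ii) that distinct restarts map to distinct progress steps of other processes so the bound is additive in $M$ rather than multiplicative in the number of conflicts, and (iii) that the cost of one restart is $\O(\log N)$ and not more. The clean way to do this is a potential/token argument: give each write operation a budget of $\O(P \log N)$ tokens, hand one token to each process that must restart because of an observable action of that write, and show a process restarting $k$ times consumes $k$ tokens from $k$ distinct writes (since each restart is provoked by a fresh write step it had not yet observed). Combined with the $\O(M \log^2 N)$ productive work this yields $\O(M \log^2 N + P M \log N)$. The remark that the bound is insensitive to which optimizations are enabled follows because disabling them only removes non-blocking fast paths and replaces them with lock-guarded sequential code, which performs no retries and costs no more than the sequential bound already accounts for.
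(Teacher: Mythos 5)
Your overall decomposition (sequential work of Holm et al.\ plus a per-modification, per-thread overhead charged via a token argument) is the same as the paper's, and the $\O(M\log^2 N)$ term is handled correctly. However, you leave the crux of the theorem unresolved: you explicitly oscillate between $\O(P M\log N)$ and $\O(P M\log^2 N)$ for the retry term, and end by saying you ``expect the intended argument'' to bound restart-inducing events per write by $\O(1)$ without establishing it. That is a genuine gap, because the extra factor of $\log N$ you cannot eliminate would make the proved bound strictly weaker than the stated one. The paper's proof rests precisely on the invariant you were unable to pin down: \emph{each successful modification causes at most $\O(\log N)$ additional work for every other thread}, i.e.\ $\O(1)$ forced restarts per (modification, thread) pair, each costing $\O(\log N)$.

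The missing observation that closes the gap is that all non-blocking code paths interact only with the \emph{level-zero} forest $F_0$ and the per-edge statuses: \texttt{connected} (Listing~\ref{nonblocking-reads}) reads only $F_0$ root versions, \texttt{lock\_components} (Listing~\ref{lst:fg}) locks only $F_0$ roots, and the non-spanning fast paths test only edge statuses and the \texttt{remove\_op} slot. Although a spanning-edge removal may perform $\Theta(\log N)$ cuts, links, and version increments across the levels $F_1,\dots,F_{l_{\max}}$, it performs only $\O(1)$ merges/splits of $F_0$ (one split, plus one merge if a replacement is found) and $\O(1)$ status transitions per edge; so the number of events visible to a concurrent non-blocking operation is $\O(1)$ per successful modification, not $\O(\log N)$. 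Since any thread has at most one operation in flight at the instant such an event occurs, each modification forces $\O(1)$ restarts on each of the $P-1$ other threads, each restart redoing $\O(1)$ \texttt{find\_root} calls at cost $\O(\log N)$, for a total of $\O(P M\log N)$. Your token argument then goes through with a budget of $\O(P\log N)$ tokens per write. Note also that the paper's proof folds two further overheads into the same budget that your proposal does not mention: the $\O(1)$ work a non-blocking status change imposes on other modifications, and the $\O(\log N)$ extra work a blocking removal performs when a concurrent addition has already installed a replacement into its slot; both fit within the $\O(\log N)$ per (modification, thread) allowance.
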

\begin{proof}
Obviously, for $P=1$ our full algorithm has the same amount of work as the sequential algorithm by Holm et al. To prove the theorem in case of concurrency we observe that a success in a modification operation can cause at most $\O(\log N)$ additional work for every other thread.
\begin{itemize}
    \item A successful spanning forest modification induces $\O(\log N)$ extra work for concurrent connectivity queries, non-spanning additions, or non-spanning removals due to re-tries. 
    \item Non-blocking additions and removals cause at most $\O(1)$ work for other modifications due to edge state changes and at most $\O(\log N)$ work for a replacement edge search.
    \item Another $\O(\log N)$ \emph{extra} work is performed by a blocking removal to find a replacement in case a concurrent addition has already proposed a replacement.
\end{itemize}
\end{proof}
 

\section{Evaluation}
\label{section:experiments}

To evaluate our concurrent optimizations and their combinations we perform a range of experiments on various workloads for both real and synthetic graphs. Scenarios with different ratios of reads to updates help us evaluate the non-blocking reads, as well as the applicability of our algorithms to different workloads. Additionally, we employ scenarios with only additions or only deletions to compare these two operations. All scenarios are executed on graphs with different properties, e.g. with different graph density. Our experiments are publicly accessible~\cite{our-experiments}.

\subsection{Evaluation Setup}

\paragraph{Benchmarks}
We use three different workloads in the experiments:
\begin{itemize}
    \item \textbf{The Random Subset Scenario}. Here, the data structure is initially filled with half of the graph chosen randomly. After that, multiple parallel threads execute random operations on random edges of the graph. To keep the number of edges in the structure nearly constant, the percentage of edge additions is equal to the percentage of edge removals, which means that only the ratio of reads to modifications can change. It is essentially the same scenario that was in Aksenov et al. experiments~\cite{Aksenov2018}. 
    
    \item \textbf{Incremental Scenario}. Concurrent threads add the whole graph to the initially empty data structure. This benchmark is inspired by the \emph{incremental dynamic connectivity problem}.
    
    \item \textbf{The Decremental Scenario}. This benchmark is opposite to the previous one: parallel threads remove graph edges from the data structure initialized with the whole graph.
    
\end{itemize}

    
\paragraph{Graphs}
We used both real graphs and random graphs listed in Tables~\ref{table:small-graphs} and \ref{table:large-graphs}. We divide these graphs into relatively small and large. Smaller graphs help to analyze dependencies between the number of threads and the performance. Large graphs are always processed by the maximum number of threads.
The graph of USA roads was taken from a competition on finding shortest paths~\cite{dimacs:challenge9}. Other real-world graphs are from the SNAP graph dataset~\cite{snapnets}. The random graphs were generated by the Erdős-Rényi model, where all edges are present with an equal probability. Large synthetic graphs are from another graph processing competition~\cite{dimacs:challenge10}. We remove loops and multiple edges from the graphs as they do not affect connectivity.

\begin{table}[h]
\centering\renewcommand\cellalign{lc}
\caption{The smaller graphs used in the experiments.}
\vspace{-1em}
\begin{tabular}{|l|l|l|l|}
\hline
\thead{Graph}                            & \thead{|V|} & \thead{|E|} & \thead{Comment}                                                                                                                                \\ \hline
\makecell{USA roads}                      & 435666            & 521200           & \makecell{\small Colorado State roads}                                                                                                             \\ \hline
\makecell{Twitter}                         & 81306             & 1342296          & \makecell{\small Social circles in Twitter}                                                                                                 \\ \hline
\makecell{Stanford web}                    & 281903            & 1992636          & \makecell{\small Web graph of Stanford.edu }                                                                                     \\ \hline
\makecell{\small Random, $|E|=|V|$}            & 400000            & 400000           & \makecell{Random sparse graph}                                           \\ \hline
\makecell{\small Random, $|E|=2|V|$}           & 300000            & 600000           & \makecell{Random sparse graph}                                           \\ \hline
\makecell{\small Random,\\ $|E|=|V| \log |V|$}   & 100000            & 1600000          & \begin{tabular}[c]{@{}l@{}} \makecell{Random dense graph}\end{tabular}                                     \\ \hline
\makecell{\small Random,\\ $|E|=|V| \sqrt{|V|}$} & 20000             & 1600000          & \begin{tabular}[c]{@{}l@{}} \makecell{Random high-density\\
graph}\end{tabular}                               \\ \hline
\makecell{Random,\\ 10 components}         & 100000            & 1600000          & \begin{tabular}[c]{@{}l@{}}\makecell{Random graph with\\ 10 components}\end{tabular} \\ \hline
\end{tabular}
\label{table:small-graphs}
\end{table}

\begin{table}[h]
\centering\renewcommand\cellalign{lc}
\caption{The large graphs used in our experiments.}
\vspace{-1em}
\begin{tabular}{|l|l|l|l|}
\hline
\thead{Graph}                            & \thead{|V|} & \thead{|E|} & \thead{Comment}                                                                                                                                \\ \hline
\makecell{USA roads}                      & 23.9M            & 28.9M           & \makecell{Full USA roads}                                                                                                             \\ \hline
\makecell{LiveJournal}                         & 4.8M             & 42.9M          & \makecell{\small Friendship graph in LiveJournal}                                                                                                 \\ \hline
\makecell{Kron}                    & 2.1M            & 91M          & \makecell{\small Synthetic Kronecker graph }                                                                                     \\ \hline
\makecell{ Random}            & 4.2M            & 48M           & \makecell{Erdös-Rényi random graph}                                           \\ \hline
\end{tabular}
\label{table:large-graphs}
\end{table}

\paragraph{Hardware and Software}
All the algorithms presented in this paper are implemented in Kotlin for JVM.
The experiments were run on a server with 4 Intel Xeon Gold 6150 (Skylake) sockets; each socket has 18 2.70 GHz cores, each of which multiplexes 2 hardware threads, for a total of 144 hardware threads. We used OpenJDK 11 \footnote{We use Java 11 because \texttt{jmh-gradle-plugin} does not work with the recent Java versions, but this should not affect the analysis significantly.} in all the experiments and the Java Microbenchmark Harness (JMH) for running benchmarks~\cite{jmh}.

\subsection{Algorithm Combinations}

Our experiments examine a wide variety of techniques and scenarios. We enumerate them to make our plots easier to read.
First, (1) a coarse-grained locking can be used to synchronize \emph{all} operations. We can improve it by (2) using a readers-writer lock instead, (3) using the non-blocking \connected{} implementation, leveraging hardware transactions (HTM) via lock elision~\cite{htm, rajwar2001speculative}. This can be done for (4) all operations, or (5) only for updates.
Similarly, (6) the solution with fine-grained locks can be improved with (7) readers-writer locks and with (8) non-blocking \connected{}. In addition, the algorithm for non-blocking non-spanning edge updates can be applied to the versions with both (9) fine-grained and (10) coarse-grained locking, each uses non-blocking \connected{} in our experiments. 
It is also possible to use non-blocking non-spanning edge updates and \connected{}-s with HTM (11) for spanning edge additions and removals.
In addition, our optimizations are compared with (12) the \emph{parallel combining} technique~\cite{Aksenov2018}, which processes read operations in parallel, and (13) a similar \emph{flat combining} algorithm, but leveraging our non-blocking reads.

\subsection{Experimental Results}
\paragraph{Lock-Free Reads} 
The first observation we make is that our non-blocking connectivity query is consistently the most efficient way to implement the \texttt{connected} operation. The readers-writer lock does not scale even for read-dominated scenarios and, in fact, is worse than normal locks. Flat combining with non-blocking reads performs better than the read-optimized flat parallelization technique. Moreover, our non-blocking reads do not have worse single-threaded performance, since the additional memory locations they read are likely to be in the cache. What is more, in fact, more than $99.99\%$ reads succeed on the first try, so there is no thread starving and they are \emph{practically} wait-free.

\paragraph{Fine-Grained Locking} 
Fine-grained locking over connected components is a more controversial optimization. For those graphs that have multiple components of connectivity, the algorithm scales substantially better, but for graphs with only one component it is slower than the one with coarse-grained locking. Surprisingly, there is an exception to this rule. In the incremental and decremental scenarios for the USA roads graph, which has only one component of connectivity, the algorithms with fine-grained locking scale well. We believe that this is the consequence of the fact that the road graph is planar -- when a small number of random edges are removed from a planar graph, the graph quickly loses its connectivity.

\paragraph{Hardware Transactional Memory}
The algorithms with the lock elision technique have better performance than the ones with coarse-grained locking on our random scenario, except for our full algorithm, where the performances match. However, they are worse in modification-only scenarios, because due to contention hardware transactions are likely to be aborted. For our algorithm there is no difference between coarse-grained locking and the lock elision technique, since the goal of our algorithm is to minimize the number of lock acquisitions.

\paragraph{Non-Blocking Modifications}
Our non-blocking modification algorithm works well for dense graphs, such as the Twitter graph. It can scale in scenarios with many modifications, however, the trade-off for such scalability is that our algorithm is slower for single-threaded computations. The only case where it does not show itself from a positive side is the decremental scenario. The reason why in this case the performance drops lies in the fact that spanning edge removals are far slower than non-spanning ones ($\O(\log^2 N)$ vs $\O(1)$), so not much work can be parallelized.

\paragraph{Flat combining}
In most scenarios the fastest algorithm is either our algorithm with coarse-grained or fine-grained locking, or a flat combining algorithm with our non-blocking reads. Specifically, our algorithm is scalable in many practical scenarios, where queries dominate and updates are relatively infrequent. However, we also emphasize that in ``bad'' scenarios, where there are many spanning edge modifications and a single connected component, the flat combining algorithm is better due to low synchronization costs and better cache locality.

\section{Conclusion}

We presented a first concurrent variant of the dynamic connectivity, based on the classic sequential algorithm of Holm et al.~\cite{Holm2001} and a concurrent single-writer Euler Tour Tree data structure. We have investigated a wide range of technical options for implementing our concurrent design, with various trade-offs.

\bibliographystyle{ACM-Reference-Format}
\bibliography{source.bib}

\appendix

\clearpage

\section{Non-Linearizability Problem when Omitting the Last Root Search in Listing~\ref{nonblocking-reads}}\label{appendix:ett_counterexample}

\begin{theorem}
Omitting the last root search in the algorithm from Listing~\ref{nonblocking-reads} (line~\ref{line:connected:last_search}) results in a non-linearizable connectivity check.
\end{theorem}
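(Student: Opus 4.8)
The plan is to construct an explicit concurrent execution in which a single writer performs a carefully timed sequence of edge additions and removals, while one reader runs the modified \texttt{connected(u, v)} (without line~\ref{line:connected:last_search}), and show that the reader returns \texttt{false} even though $u$ and $v$ are in the same connected component throughout the reader's execution interval. By Theorem~\ref{reads-linearizable}'s Case~2 analysis, the danger scenario is precisely when \texttt{find\_root(u)} first sees root $r_u$ with version $V_u$, then \texttt{find\_root(v)} sees a different root $r_v$ with version $V_v$, then the second \texttt{find\_root(v)} still returns $(r_v, V_v)$ — and with the last check removed, nothing rules out that between the first \texttt{find\_root(u)} and \texttt{find\_root(v)} the writer split and re-merged the component in a way that makes $u$'s root momentarily $r_u$ and $v$'s root momentarily $r_v$ even though they were continuously co-connected via intermediate configurations.

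First I would set up a small concrete forest: take a component containing both $u$ and $v$, whose ETT is a single Cartesian tree with root $r_u$; the writer will perform a \texttt{remove\_edge} that splits off a subtree rooted at $r_v$ containing $v$ but, because a replacement edge exists, immediately reconnects — recall from Section~\ref{sec:ett} that the split procedure increments a root version \emph{before} the structural change, and reconnection via the replacement triggers a merge, which again bumps the relevant root version. The key timing is: let \texttt{find\_root(u)} complete and read $(r_u, V_u)$; then let the writer begin a modification that (transiently, during its internal cuts/links) makes $r_v$ reachable as the parent-link sink for $v$, while $r_u$ is still the sink for $u$ — this is exactly the intermediate "restructuring" state of the atomic merge/split that keeps parent links but has not finished relinking. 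The reader's \texttt{find\_root(v)} then reads $(r_v, V_v)$ while the writer is paused mid-restructure, and the reader's check ``$(u\_root, u\_version) \ne \texttt{find\_root}(u)$'' on the line before was already passed, so with line~\ref{line:connected:last_search} omitted the reader's very next step is ``\texttt{if u\_root != v\_root}'', it enters the branch, does ``\texttt{if (v\_root, v\_version) != find\_root(v): continue}'' which also passes since the writer is still paused, and then \emph{returns} \texttt{u\_root == v\_root}, i.e. \texttt{false} — whereas $u$ and $v$ were in the same component at every instant of the reader's run, since the writer's modification never actually disconnected them (the replacement was available from the start).

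The main obstacle will be making the intermediate parent-link state consistent with the invariants of the atomic merge/split from Section~\ref{sec:ett}: I must exhibit a genuine Cartesian-forest configuration reachable by the described \texttt{remove\_edge}/reconnect in which, simultaneously, following parent links from $u$ reaches a sink $r_u$ and following parent links from $v$ reaches a \emph{different} sink $r_v$, yet the logical component is unchanged — and crucially I must check the version bookkeeping, namely that the writer has so far bumped the version only on a root \emph{other} than $r_u$ (so $u$'s first and would-be-last check agree) and only on a root other than $r_v$ during the reader's window (so $v$'s two checks agree). This is where I would lean on the fact that, per the text, the version is incremented on exactly one root per modification and ``is \emph{at most} one step ahead'' — I would arrange two staged modifications so that the already-incremented versions sit on roots the reader never queries, and then exhibit the final state after the writer resumes (everything re-merged, $u$ and $v$ sharing a root again) to confirm linearizability would have forced \texttt{true}. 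I would conclude by contrasting with the full algorithm: the omitted last \texttt{find\_root(u)} would, in this execution, return a root or version different from $(r_u, V_u)$ (because the writer's first modification, which made $r_v$ a separate sink, necessarily touched $u$'s subtree and its root version on the way), so the check would force a \texttt{continue} and avert the incorrect answer — thereby pinpointing exactly why that final re-check is indispensable.
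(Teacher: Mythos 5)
Your overall strategy is the right one — the paper also proves this by exhibiting an explicit interleaving of one writer and one reader in which the reader returns \texttt{false} while $u$ and $v$ are never disconnected, exactly the Case-2 pattern $(r_u,V_u),(r_v,V_v),(r_u,V_u),(r_v,V_v)$ you identify. However, your proposal stops at a plan, and the ``main obstacle'' you name is, as stated, unresolvable: you ask for a single reachable configuration in which the logical component of $u$ and $v$ is unchanged yet, \emph{simultaneously}, parent-link traversal from $u$ reaches sink $r_u$ and traversal from $v$ reaches a different sink $r_v$. The whole point of the atomic merge/split design in Section~\ref{sec:ett} is that at every instant each logical component has exactly one sink, so no such configuration exists. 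The counter-example cannot live in one snapshot; it must exploit the fact that \texttt{find\_root} is not atomic — a traversal can read a parent pointer, be preempted, and later read \texttt{parent == null} and a version at a node that has meanwhile ceased to be (or only transiently re-become) the root of the caller's component — together with the temporal separation of the four searches. That staleness mechanism is the actual content of the proof and is absent from your construction.

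Two further concrete problems. First, your scenario has the writer ``split off a subtree rooted at $r_v$ containing $v$'' and reconnect via a replacement: if the component containing $v$ is ever logically separated from $u$'s, then there \emph{is} an instant at which $u$ and $v$ are disconnected, and returning \texttt{false} becomes linearizable — your counter-example would prove nothing. The paper avoids this by only ever splitting off a part containing \emph{neither} $u$ nor $v$: with parent links $u\to w$, $v\to w$, $w\to r$, the writer removes the edge $(w,r)$ (detaching only the singleton $r$), re-adds it, and begins removing it again; $u$ and $v$ stay together throughout, while the first and third searches for $u$ both happen to observe the stale pair $(r,2)$ (the first via a traversal that reached $r$ before the cut, the third by catching the mid-restructure state in which $r$ is momentarily the sink again with its old version) and both searches for $v$ observe $(w,3)$. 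Second, your version bookkeeping (``the writer has bumped the version only on roots the reader never queries'') does not match what is needed: the versions of the queried roots \emph{are} bumped, repeatedly; what must be arranged is that every bump of a queried root lands outside the window between the two reads of that root's version. You need to replace the impossible single-configuration target with an explicit step-by-step schedule of this kind before the argument goes through.
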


For readability, we repeat the \connected{} algorithm for single-writer Euler Tour Trees. 

\begin{figureAsListing}

\end{figureAsListing}

\begin{proof} 
We show a counter-example for the algorithm without the last root search. 

Consider four nodes: $u$, $v$, $w$, $r$. Assume that they are connected with the following parent links: $u \rightarrow w$, $v \rightarrow w$, $w \rightarrow r$. 
Figure~\ref{counter-example} illustrates the structure.

A reader checks connectivity of $u$ and $v$. Initially, all versions are $0$. The following interleaving with one reader (\textbf{R}) and one writer (\textbf{W}) leads to a non-linearizable result:
\begin{enumerate}
    \item \textbf{R:} The reader's first root search starts and reaches vertex $r$.
    \item \textbf{W:} A concurrent writer removes the edge $(w, r)$; thus, removing the parent link $w \rightarrow r$. The versions of $r$ and $w$ are increased by $1$. 
    \item \textbf{W:} After that, the writer starts an addition of the previously removed edge, but updates only versions of $r$ and $w$ at the current moment. Therefore, the versions of $r$ and $w$ become $2$.
    \item \textbf{R:} The first root search finishes with the result $(r, 2)$. 
    \item \textbf{R:} The second search starts, reaches $w$, finds out that it has no parent, and becomes preempted by the scheduler.
    \item \textbf{W:} The writer finishes the addition of edge $(w, r)$, adding parent link $r \rightarrow w$. 
    \item \textbf{W:} The writer starts removing edge $(w, r)$ again. It increments the version of $w$ at first.
    \item \textbf{R:} The reader reads the version of $w$ equal to $3$ and finishes the root search. 
    \item \textbf{R:} The reader starts the third root search and finds the root $r$ with version $2$.
    \item \textbf{W:} The writer finishes the removal.
    \item \textbf{R:} The fourth root search finds root $w$ with version $3$.
\end{enumerate}

The vertices $u$ and $v$ never were in different components. However, the \connected{} algorithm without the last root search returns \texttt{false}.

\end{proof}

\begin{figure}
\centering
\includegraphics[width=0.23\linewidth]{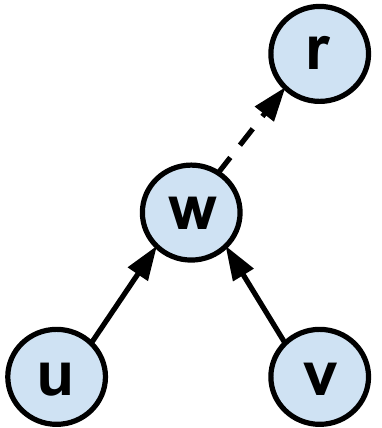}
\caption{Illustration for the counter-example to the connectivity check without the last root search.}
\label{counter-example}
\end{figure}

\newpage
\section{Full Set of Experimental Results}\label{appendix:experiments}


\begin{figure*}[h]
\centering
\includegraphics[width=\textwidth]{src/Random scenario, 80 reads.pdf}
\caption{Experimental results on the \emph{random} scenario with $80\%$ connectivity checks and $20\%$ updates.}
\label{scenario:random:80}
\end{figure*}

\begin{figure*}[h]
\centering
\includegraphics[width=\textwidth]{src/Random scenario, 99 reads.pdf}
\caption{Experimental results on the \emph{random} scenario with $99\%$ connectivity checks and $1\%$ updates.}
\label{scenario:random:99}
\end{figure*}

\begin{figure*}[h]
\centering
\includegraphics[width=0.85\textwidth]{src/Incremental scenario.pdf}
\caption{Experimental results on the \emph{incremental} scenario.}
\label{scenario:incremental}
\end{figure*}

\begin{figure*}[h]
\centering
\includegraphics[width=0.85\textwidth]{src/Decremental scenario.pdf}
\caption{Experimental results on the \emph{decremental} scenario.}
\label{scenario:decremental}
\end{figure*}

\begin{figure*}[h]
\centering
\includegraphics[width=\textwidth]{src/Large random scenario, 80 reads.pdf}
\caption{Experimental results on the \emph{random} scenario with $80\%$ connectivity checks and $20\%$ updates for large graphs.}
\label{scenario:large:random:80}
\end{figure*}

\begin{figure*}[h]
\centering
\includegraphics[width=\textwidth]{src/Large random scenario, 99 reads.pdf}
\caption{Experimental results on the \emph{random} scenario with $99\%$ connectivity checks and $1\%$ updates for large graphs.}
\label{scenario:large:random:99}
\end{figure*}

\begin{figure*}[h]
\centering
\includegraphics[width=\textwidth]{src/Large incremental scenario.pdf}
\caption{Experimental results on the \emph{incremental} scenario for large graphs.}
\label{scenario:large:ncremental}
\end{figure*}

\begin{figure*}[h]
\centering
\includegraphics[width=\textwidth]{src/Large decremental scenario.pdf}
\caption{Experimental results on the \emph{decremental} scenario for large graphs.}
\label{scenario:large:decremental}
\end{figure*}

\clearpage
\section{Scenario Statistics}

In this appendix section we provide some statistics obtained during the experiments. Table~\ref{table:random_statistics} shows the rates of non-spanning edge additions and removals along with the size of the largest connectivity component obtained during the experiment under the \emph{random scenario}. The last Tables~\ref{table:inc_statistics}~and~\ref{table:dec_statistics} show the rate of non-spanning edge additions and removals in the \emph{incremental} and \emph{decremental} scenarios respectively.

\begin{table}[H]
\caption{The \emph{random scenario} statistics: shows the rates of non-spanning edge additions and removals plus the size of the largest connected component (divided by the total number of nodes) discovered during the experiment.}
\centering
\label{table:random_statistics}
\begin{tabular}{|l|l|l|l|l|}
\hline
\thead{Graph}                             & \thead{\% non-sp.\\additions} & \thead{\% non-sp.\\removals} & \thead{Largest\\ CC, \%} \\ \hline
USA roads & 6.3 & 1.5 & 0.4                                                                  \\ \hline
Twitter                        & 98.7 & 88.6 & 94.2                                                                  \\ \hline
Stanford web                   & 94.4 & 74.5 & 74.0
                                                               \\ \hline
Random, $|E|=|V|$             & 0.1 & 0.0 & 0.9
                                                                 \\ \hline
Random, $|E|=2|V|$            & 63.4 & 16.0 & 77.8
                                                                \\ \hline
Random, $|E|=|V| \log |V|$    & 100.0 & 87.5 & 100.0
                                                                 \\ \hline
Random, $|E|=|V| \sqrt {|V|}$ & 100.0 & 97.5 & 100.0
                                                                    \\ \hline
Random, 10 components          & 100.0 & 87.4 & 10.0                                                                     \\ \hline
\end{tabular}
\end{table}

\begin{table}[H]
\caption{The \emph{incremental scenario} statistics: shows the rates of non-spanning edge additions plus the size of the largest connected component (divided by the total number of nodes) discovered during the experiment.}
\centering
\label{table:inc_statistics}
\begin{tabular}{|l|l|}
\hline
\thead{Graph}                             & \thead{\% non-spanning\\ additions} \\ \hline
USA roads                       & 16.4                           \\ \hline
Twitter                          & 93.9                           \\ \hline
Stanford web                     & 85.9                           \\ \hline
Random, $|E|=|V|$             & 16.1                           \\ \hline
Random, $|E|=2|V|$            & 51.0                             \\ \hline
Random, $|E|=|V| \log |V|$    & 93.8                           \\ \hline
Random, $|E|=|V| \sqrt {|V|}$ & 98.8                           \\ \hline
Random, 10 components          & 93.8                           \\ \hline
\end{tabular}
\end{table}

\vspace{2em}

\begin{table}[H]
\caption{The \emph{decremental scenario} statistics: shows the rates of non-spanning edge removals plus the size of the largest connected component (divided by the total number of nodes) discovered during the experiment.}
\vspace{0.25cm}
\centering
\label{table:dec_statistics}
\begin{tabular}{|l|l|}
\hline
\thead{Graph}                             & \thead{\% non-spanning\\removals} \\ \hline
USA roads                       & 3.7                          \\ \hline
Twitter                          & 81.3                         \\ \hline
Stanford web                     & 66.8                         \\ \hline
Random, $|E|=|V|$             & 2.9                          \\ \hline
Random, $|E|=2|V|$            & 19.4                         \\ \hline
Random, $|E|=|V| \log |V|$    & 76.9                         \\ \hline
Random, $|E|=|V| \sqrt {|V|}$ & 93.4                         \\ \hline
Random, 10 components          & 76.9                         \\ \hline
\end{tabular}
\end{table}

\clearpage

\begin{figure*}[h]
\centering
\includegraphics[width=\textwidth]{src/Stats, Random scenario, 80 reads.pdf}
\caption{Active time without waiting on the \emph{random} scenario with $80\%$ connectivity checks and $20\%$ updates.}
\label{scenario:stats:random:80}
\end{figure*}

\begin{figure*}[h]
\centering
\includegraphics[width=\textwidth]{src/Stats, Random scenario, 99 reads.pdf}
\caption{Active time without waiting on the \emph{random} scenario with $99\%$ connectivity checks and $1\%$ updates.}
\label{scenario:stats:random:99}
\end{figure*}

\begin{figure*}[h]
\centering
\includegraphics[width=\textwidth]{src/Stats, Incremental scenario.pdf}
\caption{Active time without waiting on the \emph{incremental} scenario.}
\label{scenario:stats:incremental}
\end{figure*}

\begin{figure*}[h]
\centering
\includegraphics[width=\textwidth]{src/Stats, Decremental scenario.pdf}
\caption{Active time without waiting on the \emph{decremental} scenario.}
\label{scenario:stats:decremental}
\end{figure*}

\clearpage

\section{Lock-Free Non-Spanning Edge Updates: Full Algorithm and Implementation Details}\label{appendix:non-spanning}

In the main part of our paper, we presented lock-free non-spanning edge updates in the simplified case when concurrent additions of the same edge are prohibited,  there is a \emph{global} set of non-spanning edges, and the algorithm uses coarse-grained locking. Now we discuss how to get rid of these simplifications.

\paragraph{Edge Statuses}
The main difference in the state machine is that we need additional status \texttt{IN PROGRESS} for spanning edge additions. Previously, we changed the edge status at the beginning of spanning edge addition to indicate concurrent removals that they should take locks in order to remove this edge. However, concurrent additions of the same edge can not distinguish two cases: the case when an edge is already present and nothing should be done, and the case when the edge is being added concurrently and the operation should wait until its addition is finalized. To fix the problem, at the beginning of spanning edge addition the edge status becomes \texttt{IN PROGRESS} and is changed to \texttt{SPANNING} only after all necessary ETT modifications. The updated state machine is in Figure~\ref{status-diagram-complete}.

\begin{figure}[b]
\centering
\includegraphics[width=0.7\linewidth]{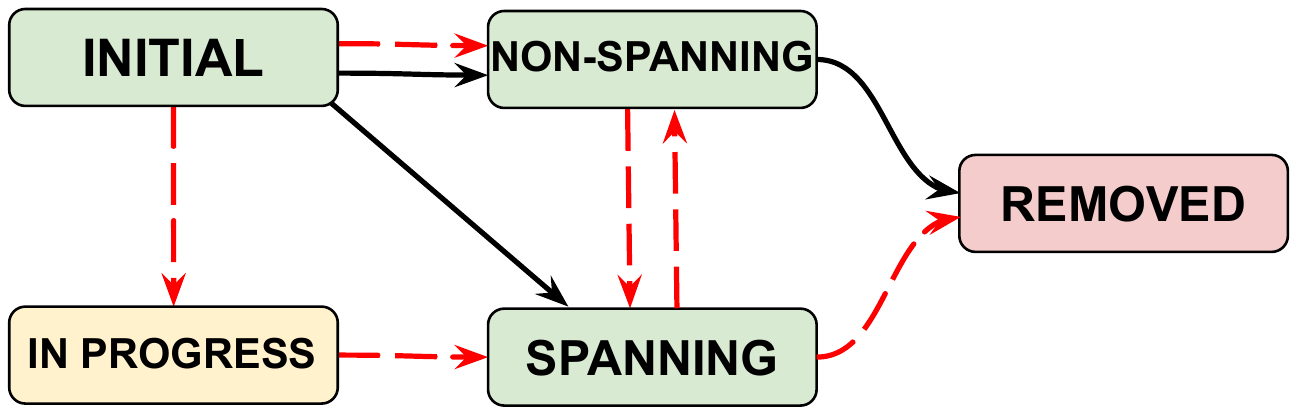}
\caption{Possible transitions between edge statuses. Red dashed transitions are performed under locks.}
\label{status-diagram-complete}
\end{figure}

\paragraph{The Data Structure}
Listing~\ref{data-structures-complete} shows data structures used in our algorithm and their interfaces. The main class includes ETT levels (line~\ref{line:data-forest}) and a concurrent hash map for edge statuses and levels. The \texttt{REMOVED} status corresponds to an absence in this hash table. An edge status and a level are stored together because it allows to change them atomically, and as a result, simplifies memory management. For performance purposes, an edge level and a status can be merged to fit in a machine word, since they require only $\O(\log \log N)$ and $\O(1)$ bits respectively. In the case of this optimization, to avoid the ABA problem we pair \texttt{INITIAL} status with random bits, this way an edge can not be added twice if other threads can help.

As for ETT Nodes, each node contains, in addition to pointers to other nodes, a version used for non-blocking reads, a multi-set with non-spanning edges adjacent to the corresponding vertex, flags needed for efficient navigation in the tree, a field for announcing spanning edge removals, and a lock. All these fields, except for the version, either were not present in the simplified case or were global for the whole data structure.

\begin{figureAsListing}
\begin{lstlisting}
class ConcurrentDynamicConnectivity(n: Int) {
  // Spanning forests for each graph level
  #\label{line:data-forest}#val F = ConcurrentEulerTourTree[log2(n) + 1]
  // Level and status for each edge
  #\label{line:data-states}#val states = ConcurrentHashMap<Edge, State>()
  func add_edge(Edge)
  func remove_edge(Edge)
  func connected(Vertex, Vertex): Bool
}

class ConcurrentEulerTourTree {
  func add_edge(Edge)
  func remove_edge(Edge)
  func connected(Vertex, Vertex): Bool
  // Returns the node corresponding to a vertex
  func node(Vertex): ETTNode
  func find_root(Vertex): ETTNode
}

class ETTNode {
  // Parent link and children
  var parent, left, right: ETTNode
  // Version of the node (for root only)
  var version: Int
  // For storing non-spanning adjacent edges
  #\label{line:data-non-spanning-edges}#val edges = ConcurrentMultiSet<Edge>()
  // Flags whether there are spanning
  // or non-spanning edges in the subtree
  #\label{line:data-has-non-spanning}#var has_non_spanning_edges: Bool
  #\label{line:data-has-spanning}#var has_spanning_edges: Bool
  // Concurrent removal operation (for root only)
  #\label{line:data-remove-op}#var removal_op: SpanningEdgeRemoveOp?
  #\label{line:data-lock}#val lock = Lock()
}

class RemovalOperation(val u: Vertex, val v: Vertex) {
  var replacement: Pair<Edge, State>? = null
}

class State(val status: Status, val level: Int)
\end{lstlisting}
\caption{Data structures used by our algorithm. Some fields, methods, and classes are omitted, because they are not important for the concurrent case.}
  \label{data-structures-complete}
\end{figureAsListing}

\paragraph{Edge Management}
For storing non-spanning edges in Cartesian tree nodes our algorithm uses a concurrent lock-free multiset, which allows iterating over its elements. It is a multiset rather than a set because we permit multiple copies of the same edge to be present simultaneously. The invariant that we maintain is that if a non-spanning edge of level $r$ is in the graph, then it should have at least one copy in multisets in corresponding adjacent nodes of $F_r$. This invariant is easy to provably satisfy: an operation that tries to add a non-spanning edge inserts information about it \emph{before} the linearization point, a removal operation deletes the information only \emph{after} the linearization point. Similarly, when raising an level $r$ edge, we firstly add the information to level $r+1$ and only after the \texttt{CAS} transition from \texttt{State(NON-SPANNING, r)} to \texttt{State(NON-SPANNING, r+1)} delete the information from the previous level. If some inserted information becomes ``useless'', e.g. if the \texttt{CAS} failed, the same operation deletes it to prevent memory leaks.  

A more complex problem is that the Cartesian trees should have correct flags ``whether there are non-spanning edges in the subtree of a node'' in order to be able to iterate over all current level non-spanning edges, spending only $\O(\log N)$ work per edge. In fact, it will not break the correctness of the algorithm if a small number of nodes has these flags falsely set to true. A replacement search will traverse some odd nodes, but it can potentially fix their flags via backward recursion. So, the idea is to set all flags to \texttt{true} in the path between a node and the root when adding information about a non-spanning edge to the node, but in the same time not to change any flags when removing the information. In other words, flags can be set to \texttt{false} \emph{only} by a replacement search \emph{under the locks}. However, the replacement search can mistakenly set a flag to \texttt{false}, when there is a concurrent addition that has just added information about a new edge and set the flag to \texttt{true}. Our observation is that a simple re-check after setting to \texttt{false} (lines~\ref{line:flags-recheck-start}-\ref{line:flags-recheck-end} in Listing~\ref{listing:edge-information-management}) can help to avoid incorrect flags (Lemma~\ref{flag-management-theorem}). 

\begin{lemma}
\label{flag-management-theorem}
If \texttt{has\_non\_spanning\_edges} flag management is performed according to Listing~\ref{listing:edge-information-management}, where non-blocking additions just set needed flags to \texttt{true} and flag recalculation re-checks if it has written \texttt{false}, a thread holding the lock can not wrongly see \texttt{false} when there is a completely added non-spanning edge in the subtree.
\end{lemma}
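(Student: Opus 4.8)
The plan is to argue by contradiction: suppose a thread $T$ holding the lock on some node $x$ observes \texttt{has\_non\_spanning\_edges} equal to \texttt{false} at $x$, yet at that moment there exists a non-spanning edge $e$ that was \emph{completely added} and lies in the subtree of $x$ (i.e.\ its information sits in the multiset of some descendant $y$ of $x$, possibly $y=x$). The key structural fact I will use is the protocol from Listing~\ref{listing:edge-information-management}: a non-blocking addition first inserts the edge information into $y$'s multiset, and only \emph{afterwards} walks up the path $y \leadsto \text{root}$ setting every flag to \texttt{true}; a flag is set to \texttt{false} only by a flag-recalculation step executed by a thread holding the lock, and that step performs a re-check (lines~\ref{line:flags-recheck-start}--\ref{line:flags-recheck-end}) after writing \texttt{false}. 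Since modifications are single-writer among lock-holders but non-spanning additions are non-blocking, the subtlety is exactly the interleaving between an addition climbing the path and a recalculation descending/ascending it.

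First I would fix the last write of \texttt{false} to $x$.\texttt{has\_non\_spanning\_edges} that precedes $T$'s read; call the thread performing it $S$, and note $S$ held the lock on $x$ (flags go to \texttt{false} only under the lock). Since $e$ is completely added and in $x$'s subtree at the time of $T$'s read, the addition $A$ of $e$ had already inserted $e$ into $y$'s multiset before $T$'s read. I would then split on the order of events along $S$'s recalculation of $x$: either $A$'s insertion into $y$'s multiset happened-before $S$ read the subtree state that led it to write \texttt{false}, or not. In the first case, $S$'s recalculation should have seen the edge (the recalculation aggregates the children's flags / the node's own multiset, and by the invariant that additions insert the information before touching flags, the information is visible), so $S$ would not have written \texttt{false} to $x$ — contradiction with $S$'s write. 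In the second case, $A$ inserted $e$ into $y$ \emph{after} $S$ computed the to-be-written value but we must still explain why $S$'s \texttt{false} is not observed by $T$: this is precisely where the re-check after writing \texttt{false} does its job — after $S$ writes \texttt{false} at $x$ it re-reads, and since $A$, having inserted $e$, will set $x$'s flag to \texttt{true} on its upward pass, $S$'s re-check observes \texttt{true} (or, if $A$ has not yet climbed to $x$, then $A$'s climb is still pending and will overwrite $S$'s \texttt{false} before it can be the ``last write'' seen by $T$). Making this last dichotomy airtight is the crux: I must show that $A$'s flag-setting write to $x$ and $S$'s re-check are ordered so that \emph{either} $S$ re-reads \texttt{true} and re-writes it (repairing the flag), \emph{or} $A$ writes \texttt{true} strictly after $S$'s final write, so that $S$'s \texttt{false} is not the write seen by $T$.

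The main obstacle I anticipate is handling the interior-node case cleanly: $e$ may reside in a deep descendant $y$, and between $A$'s multiset insertion at $y$ and $A$'s flag-write at $x$, the flag at some intermediate node $z$ on the path could be flipped to \texttt{false} by yet another lock-holding recalculation, so I cannot reason about $x$ in isolation. I would handle this by inducting on the path from $y$ up to $x$, maintaining the invariant ``at any time after $A$ inserted $e$ into $y$, for every node $z$ on the $y\leadsto\text{root}$ path, either $z$'s flag is \texttt{true}, or $A$ has a pending upward write to $z$ that will set it \texttt{true}, or the last flag-write to $z$ is a \texttt{false} that is followed by a pending re-check which will observe \texttt{true}.'' A lock-holding recalculation at $z$ either sees the edge (and does not write \texttt{false}) or writes \texttt{false} and re-checks; in the latter case, because $A$'s climb is monotone up the path, the re-check is guaranteed to catch $A$'s \texttt{true} write or to be overtaken by it, preserving the invariant. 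Applying the invariant at $z=x$ contradicts $T$'s observation of a stable \texttt{false}, completing the proof. I would also remark that this only needs the single-writer discipline on \emph{lock-protected} operations together with the linearizability of the multiset inserts, both established earlier in the paper.
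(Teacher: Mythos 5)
Your overall strategy---contradiction plus reliance on the post-write re-check and the bottom-up discipline of flag updates---is the same as the paper's, but you localize the argument to the last \texttt{false}-write at $x$ and then try to close the interior-node case with an induction along the path from $y$ up to $x$. The paper instead takes the globally \emph{first} moment at which any node has a \texttt{false} flag while a completely added non-spanning edge lies in its subtree; minimality then gives immediately that the offending write came from a recalculation and that the node's children still carried correct flags when the re-check ran, so the re-check must restore \texttt{true}. No path induction is needed.

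Your induction invariant, as stated, has a concrete gap. Its second disjunct asserts that $A$ ``has a pending upward write to $z$'', but \texttt{set\_flags\_up} returns as soon as it meets a node whose flag is already \texttt{true}; hence $A$ may never write to the upper portion of the path from $y$ to $x$ at all, and the truth of those flags then depends on whichever \emph{other} addition or re-check set the lower flag---something an invariant phrased purely in terms of $A$ does not cover, and the claim that ``$A$'s climb is monotone up the path'' and will reach $x$ is simply false in that case. Relatedly, the third disjunct (``a pending re-check which will observe \texttt{true}'') is not self-sustaining: the re-check at $z_{i+1}$ reads \texttt{should\_have\_flag\_set}, i.e.\ the \emph{current} flag of its child $z_i$, so it is not enough that $z_i$ is in a state whose flag ``will eventually'' be \texttt{true}; you need it to actually be \texttt{true} at the moment of that read, and your induction does not establish the required ordering between $A$'s write to $z_i$ (or $z_i$'s own re-check) and $z_{i+1}$'s re-check. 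The minimal-counterexample formulation sidesteps both problems at once; if you want to keep the localized argument, you must strengthen the invariant to quantify over all concurrent climbers and argue explicitly about how the single, lock-protected, bottom-up recalculation sequence interleaves with them.
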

\begin{proof}
We prove the lemma by contradiction. Suppose that at some moment of time a situation, when a node has the flag set to \texttt{false} while there is an edge in its subtree, happened. Consider the first such moment. The situation can not be caused by a thread adding a non-spanning edge, because before the linearization point of the addition all flags in the path between a node and the root are set to \texttt{true}. Consequently, it was caused by a replacement search that recalculated the flag and set it to \texttt{false}. However, the recalculation should have re-checked the flag. Since all flag updates are performed from bottom to up the children of the node should have had correct values, so the re-check should have recognized the conflict and set the flag to \texttt{true}, which leads to a contradiction.  
\end{proof}

\begin{figureAsListing}
  \begin{lstlisting}
func ConcurrentEulerTourTree.add_info(edge: Edge) {
  // Put the edge into the multisets and 
  // set all flags in the paths between 
  // the nodes and roots to `true`.
  this.node(edge.u).edges.add(edge)
  set_flags_up(this.node(edge.u))
  this.node(edge.v).edges.add(edge)
  set_flags_up(this.node(edge.v))
}

func ConcurrentEulerTourTree.remove_info(edge: Edge) {
  // Remove a copy of an edge from the multisets
  // without updating any flags.
  this.node(edge.u).edges.remove(edge)
  this.node(edge.v).edges.remove(edge)
}

func set_flags_up(node: ETTNode) {
  // Check if the flag is already enabled
  if node.has_non_spanning_edges: return
  node.has_non_spanning_edges = true
  parent := node.parent
  // If is not a root, go up
  if parent != null: set_flags_up(parent)
}

func recalculate_flags(node: ETTNode) { 
  // Called only under locks.
  // Calculate, whether there is a non-spanning edge
  // in the subtree.
  should_set_flag := should_have_flag_set(node)
  node.has_non_tree_edges = should_set_flag
  #\label{line:flags-recheck-start}#if (!has_non_tree_edges):
  #\indentrule#  // check whether 'true' was overridden by 'false'
  #\indentrule#  should_set_flag = should_have_flag_set(node)
  #\indentrule#  if (should_set_flag): 
  #\label{line:flags-recheck-end}##\indentrule#  #\indentrule#  node.has_non_tree_edges = true
}

func should_have_flag_set(node: ETTNode): Bool = 
  node.edges.size > 0 or 
  node.left.has_non_tree_edges or
  node.right.has_non_tree_edges
  \end{lstlisting}
  \caption{Edge information managing methods. \texttt{edges} is a concurrent multiset.}
  \label{listing:edge-information-management}
\end{figureAsListing}

\paragraph{The \removeEdge{} operation}
The edge removal do not change much in comparison to the simplified case (Listing~\ref{listing:remove-edge-complete}). The operation first reads the edge state (line~\ref{line:remove-complete-state-read}), then checks whether the edge is already absent and it can just complete (line~\ref{line:remove-complete-absent}). If the edge is spanning or is about to become spanning, it is removed under the locks (lines~\ref{line:remove-complete-blocking-start}-\ref{line:remove-complete-blocking-end}). Otherwise, the operation tries to remove an edge without taking locks (line~\ref{line:remove-complete-try}) by performing a CAS (line~\ref{line:remove-non-spanning}) and if successful, removes the edge physically (line~\ref{line:remove-complete-try-physical}). The replacement search upon spanning edge removal is discussed in a later paragraph.

\begin{figureAsListing}
  \begin{lstlisting}
func remove_edge(edge: Edge) {
  while (true):
  #\label{line:remove-complete-state-read}##\indentrule#  state := states[edge]
  #\indentrule#  // Check whether the edge is absent
  #\label{line:remove-complete-absent}##\indentrule#  if state == null or state.status == INITIAL: return
  #\indentrule#  status := state.status
  #\indentrule#  if status == SPANNING or status == IN_PROGRESS:
  #\indentrule#  #\indentrule#  blocking_remove_edge(edge), return
  #\indentrule#  if status == NON-SPANNING:
  #\label{line:remove-complete-try}##\indentrule#  #\indentrule#  if try_remove_non_spanning_edge(edge, state):
  #\indentrule#  #\indentrule#  #\indentrule#  return
}

func blocking_remove_edge(edge: Edge) {
  #\label{line:remove-complete-blocking-start}#with_components_locked(edge.u, edge.v) {
  #\indentrule#  state := states[edge]
  #\indentrule#  if state == null or state.status == INITIAL: 
  #\indentrule#  #\indentrule#  return // already absent
  #\indentrule#  if state == NON-SPANNING:
  #\indentrule#  #\indentrule#  try_remove_non_spanning_edge(edge), return
  #\indentrule#  // Iterate over levels, search for a replacement,
  #\indentrule#  // raise spanning and non-spanning edges
  #\indentrule#  remove_spanning_edge(edge)
  #\indentrule#  states.remove(edge) // physical removal
  #\label{line:remove-complete-blocking-end}#
}

func try_remove_non_spanning_edge(
  edge: Edge, state: State
): Bool { // returns whether is successful
  l := state.level 
  #\label{line:remove-non-spanning}#if states.CAS(edge, state, State(REMOVED, l)):
  #\indentrule#  // Finalize the removal.
  #\label{line:remove-complete-try-physical}##\indentrule#  F[l].remove_info(edge)
  #\indentrule#  return true
  return false
}  
  \end{lstlisting}
  \caption{Implementation of the \texttt{remove\_edge} operation. \texttt{try\_remove\_non\_spanning\_edge} tries to remove a non-spanning edge without taking locks and can fail.}
  \label{listing:remove-edge-complete}
\end{figureAsListing}

\paragraph{The \addEdge{} operation}
The basic algorithm of edge addition remains almost the same (Listing~\ref{listing:add-complete}). The only differences are the fact that we use states instead of just statuses, the states are stored in a hash map, and if an operation sees that the edge is being added by a concurrent thread as spanning, it waits for the operation completion by taking the locks (lines~\ref{line:add-complete-in-progress-read}-\ref{line:add-complete-in-progress-sync}). Besides, upon spanning edge addition the status changes twice -- in the beginning and in the end of the operation to \texttt{IN\_PROGRESS} and \texttt{SPANNING} respectively (lines~\ref{line:add-blocking-status-in-progress} and \ref{line:add-blocking-status-spanning}).

\begin{figureAsListing}
  \begin{lstlisting}
func add_edge(edge: Edge) {
  // Create a new initial state to avoid the ABA problem.
  initial_state := State(INITIAL, 0)
  // Atomically put the state only if there was none
  // or get the present state.
  prev_state := states.put_if_absent(edge, initial_state)
  if prev_state.status == INITIAL: 
  #\indentrule#  initial_state = prev_state // help to add the edge
  else if prev_state != null: return // already present
  while (true):
  #\indentrule#  state := states[edge]
  #\indentrule#  if state != initial_state:
  #\indentrule#  #\indentrule#  // Synchronize with another thread if needed
  #\label{line:add-complete-in-progress-read}##\indentrule#  #\indentrule#  if state.status == IN_PROGRESS:
  #\label{line:add-complete-in-progress-sync}##\indentrule#  #\indentrule#  #\indentrule#  with_components_locked(edge.u, edge.v) {}
  #\indentrule#  #\indentrule#  return // someone inserted the edge
  #\indentrule#  if connected(edge.u, edge.v):
  #\indentrule#  #\indentrule#  if try_add_non_spanning_edge(edge, state):
  #\indentrule#  #\indentrule#  #\indentrule#  return
  #\indentrule#  else: 
  #\indentrule#  #\indentrule#  blocking_add_edge(edge)
  #\indentrule#  #\indentrule#  return 
}  

func blocking_add_edge(edge: Edge, initial_state: State) {
  with_components_locked(edge.u, edge.v) {
  #\indentrule#  // Check whether the edge is already present
  #\indentrule#  if states[edge] != initial_state: return 
  #\indentrule#  if F[0].find_root(edge.u) != F[0].find_root(edge.v):
  #\indentrule#  #\indentrule#  // is a spanning edge
  #\label{line:add-blocking-status-in-progress}##\indentrule#  #\indentrule#  states[edge] = State(IN_PROGRESS, 0)
  #\indentrule#  #\indentrule#  F[0].add_edge(edge)
  #\label{line:add-blocking-status-spanning}##\indentrule#  #\indentrule#  states[edge] = State(SPANNING, 0)
  #\indentrule#  else:
  #\indentrule#  #\indentrule#  // is a non-spanning edge.
  #\indentrule#  #\indentrule#  // Simplified code of the non-blocking addition:
  #\indentrule#  #\indentrule#  // Publish the edge before its addition
  #\indentrule#  #\indentrule#  F[0].add_info(edge) 
  #\indentrule#  #\indentrule#  next_state = State(NON-SPANNING, 0)
  #\indentrule#  #\indentrule#  #\label{line:add-non-spanning-lp-2}#if !states.CAS(edge, initial_state, next_state):
  #\indentrule#  #\indentrule#  #\indentrule#  // already was non-spanning 
  #\indentrule#  #\indentrule#  #\indentrule#  F[0].remove_info(edge) 
  }
}
  \end{lstlisting}
  \caption{Basic parts of the \texttt{add\_edge} operation. \texttt{try\_add\_non\_spanning\_edge} tries to add a non-spanning edge without taking locks and can fail.}
  \label{listing:add-complete}
\end{figureAsListing}

The non-blocking addition (Listing~\ref{listing:nonblocking-addition}) first optimistically adds information about the edge to corresponding multi-sets (line~\ref{line:add-complete-non-bl-optimistical}), then finds the root of the Cartesian tree (line~\ref{line:add-find-root}), reads the \texttt{removal\_op} field from it that could have been published by a concurrent spanning edge removal in this component of connectivity. The ordering is important -- similarly to Theorem~\ref{theorem:non-spanning-edge-correctness}, if the information is inserted before the removal, it will see the edge, so in the only interesting case the removal holds the locks while the addition is in lines~\ref{line:add-find-root}-\ref{line:add-complete-remove-parallel-end} and the found root is correct. If the edge can be a replacement, it is proposed to a concurrent removal and the operation completes (lines~\ref{line:add-complete-propose-start}-\ref{line:add-complete-propose-end}). Otherwise, if the edge endpoints are still in the same component of connectivity, the operation is finalized by a status CAS (line~\ref{line:add-non-spanning-lp-1}).

An interesting change occurred in the replacement proposal. Previously, in the simplified case it was just a CAS to the replacement slot, but in the general case the logic is more difficult (lines~\ref{line:propose-replacement-complete-start}-\ref{line:propose-replacement-complete-end}). It reads the replacement slot (line~\ref{line:propose-replacement-read}) and then tries to update it in a lock-free style. If the removal is already over, nothing can be done (line~\ref{line:propose-replacement-CLOSED}). When the slot is empty, the operation just tries to write the edge via \texttt{CAS} (line~\ref{line:propose-replacement-cas-null}). However, if there is another edge in the slot, the method can not just complete, since the replacement can be removed by a parallel thread before its status is changed to \texttt{SPANNING} in line~\ref{line:add-complete-propose-success-cas}. So, in order not to wait for the status update, the thread can help by performing the same CAS (line~\ref{line:add-spanning-lp-2}). Because of such helping to avoid the ABA problem, the replacement slot contains not only the slot, but also its initial state. After that, in case of helping failure, the replacement slot is just cleared (line~\ref{line:propose-replacement-slot-clear}). This situation may seem bug-prone, because the removed edge can be non-spanning and still appropriate for the replacement. However, the fact that some thread managed to add the edge as non-spanning means that it started the addition and added the edge information before the removal, and thus, the removal will be able to find the information about this edge.

\begin{figureAsListing}
  \begin{lstlisting}
func try_add_non_spanning_edge(
  edge: Edge, state: State
): Bool { // returns whether is successful
  // Publish the edge info before its addition.
  #\label{line:add-complete-non-bl-optimistical}#F[0].add_info(edge)
  #\label{line:add-find-root}#root := F[0].find_root(u)
  remove_op := root.remove_op
  // Check whether there is a concurent removal.
  if remove_op != null:
  #\indentrule#  #\label{line:add-can-be-repl}#if remove_op.can_be_replacement(edge):
  #\label{line:add-complete-propose-start}##\indentrule#  #\indentrule#  if propose_replacement(remove_op, edge, state):
  #\indentrule#  #\indentrule#  #\indentrule#  F[0].remove_info(edge)
  #\indentrule#  #\indentrule#  #\indentrule#  #\label{line:add-complete-propose-success-cas}##\label{line:add-spanning-lp-1}#states.CAS(edge, state, State(SPANNING, 0))
  #\label{line:add-complete-propose-end}##\label{line:add-complete-remove-parallel-end}##\indentrule#  #\indentrule#  #\indentrule#  return true
  #\indentrule#  #\indentrule#  else if removal_operation.replacement == CLOSED:
  #\indentrule#  #\indentrule#  #\indentrule#  F[0].remove_info(edge)
  #\indentrule#  #\indentrule#  #\indentrule#  // The component is about to split, 
  #\indentrule#  #\indentrule#  #\indentrule#  // so the edge will become spanning
  #\indentrule#  #\indentrule#  #\indentrule#  blocking_add_edge(edge)
  #\indentrule#  #\indentrule#  #\indentrule#  return true #\label{line:add-replacement-end}#
  // Should the edge still be non-spanning?
  if F[0].connected(edge.u, edge.v) and 
     #\label{line:add-non-spanning-lp-1}#states.CAS(edge, state, State(NON-SPANNING, 0)):
  #\indentrule#  #\indentrule#   return true
  remove_edge_info(edge)
  return false
}

#\label{line:propose-replacement-complete-start}#func propose_replacement(
  remove_op: SpanningEdgeRemoveOp, 
  edge: Edge, state: State
): Bool {
  while (true):
  #\label{line:propose-replacement-read}##\indentrule#  current_replacement := removal_op.replacement
  #\indentrule#  // Is the replacement search already completed?
  #\label{line:propose-replacement-CLOSED}##\indentrule#  if current_replacement == CLOSED: return false
  #\indentrule#  if current_replacement == null:
  #\indentrule#  // Try to put the edge in the slot.
  #\label{line:propose-replacement-cas-null}##\indentrule#  #\indentrule#  if removal_op.replacement.CAS(null, (edge, state)):
  #\indentrule#  #\indentrule#  #\indentrule#  return true
  #\indentrule#  #\indentrule#  continue // try again
  #\indentrule#  if current_replacement.edge == edge:
  #\indentrule#  #\indentrule#  return true // already is the replacement
  #\indentrule#  (r_edge, r_state) := current_replacement
  #\indentrule#  #\label{line:add-spanning-lp-2}#if states.CAS(r_edge, r_state, State(SPANNING, 0))
       or states[r_edge].status == SPANNING:
  #\indentrule#  #\indentrule#     return false
  #\indentrule#  // Replacement edge was removed or is NON-SPANNING,
  #\indentrule#  // so can clear the slot.
  #\label{line:propose-replacement-slot-clear}##\indentrule#  removal_op.replacement.CAS(current_replacement, null)
#\label{line:propose-replacement-complete-end}#}
  \end{lstlisting}
  \caption{Non-blocking addition of a non-spanning edge.}
  \label{listing:nonblocking-addition}
\end{figureAsListing}

\paragraph{Replacement Search}
Given all the ideas we presented before, the replacement search is rather straightforward. It iterates over all current-level non-spanning edges using \texttt{has\_non\_spanning\_edges} flags and the concurrent multi-sets. For each edge it checks the status and the level of the edge. Upon finding an edge in the \texttt{INITIAL} status, i.e. an edge that is being added by a parallel thread as non-spanning it helps to add it (lines~\ref{line:replacement-search-helping-start}-\ref{line:replacement-search-helping-end}). After that, the search checks whether the edge can be a replacement and in this case writes it to its own replacement slot via the same \texttt{propose\_replacement} method (line~\ref{line:replacement-search-propose-replacement}) and, in case of success, deletes the information about the edge. If the edge is not the replacement, to amortize the cost of its finding the edge is promoted to the next ETT level (lines~\ref{line:replacement-search-promotion-start}-\ref{line:replacement-search-promotion-end}). This promotion is optimistical -- it adds the information about the edge to the next level before the CAS of its state and then removes the edge information from either the current level or the next level depending on the CAS results. When the removal finishes its search, it ``closes'' the replacement slot via \texttt{finalize\_replacement\_search} method (line~\ref{line:replacement-search-finalize}).

\begin{figureAsListing}
  \begin{lstlisting}
// Replacement search for F_0
func find_replacement(
  node: ETTNode, removal_op: RemovalOperation
): Bool {
  if node == null or !node.has_non_spanning_edges: 
  #\indentrule#  return false
  found := false
  // Iterate over all non-spanning edges.
  for edge in node.edges:
  #\indentrule#  state := states[edge]
  #\indentrule#  if state == null: continue
  #\indentrule#  if state.level != 0: continue
  #\label{line:replacement-search-helping-start}##\indentrule#  if state.status == INITIAL:
  #\indentrule#  #\indentrule#  // Help to add the edge.
  #\indentrule#  #\indentrule#  if removal_op.can_be_replacement(edge):
  #\indentrule#  #\indentrule#  #\indentrule#  // Add as a replacement.
  #\indentrule#  #\indentrule#  #\indentrule#  if propose_replacement(removal_op, edge, state):
  #\indentrule#  #\indentrule#  #\indentrule#    #\label{line:add-spanning-lp-3}#if states.CAS(edge, state, State(SPANNING, 0)):
  #\indentrule#  #\indentrule#  #\indentrule#  #\indentrule#  #\indentrule#  found = true, break
  #\indentrule#  #\indentrule#  // Should the edge still be non-spanning?
  #\indentrule#  #\indentrule#  if F[0].find_root(edge.u) == F[0].find_root(edge.v):
  #\indentrule#  #\indentrule#  #\indentrule#  F[0].add_info(edge)
  #\indentrule#  #\indentrule#  #\indentrule#  next_state: = State(NON-SPANNING, 0)
  #\indentrule#  #\indentrule#  #\indentrule#  #\label{line:add-non-spanning-lp-3}#if states.CAS(edge, state, next_state):
  #\indentrule#  #\indentrule#  #\indentrule#  #\indentrule#  state = next_state // successfully added
  #\indentrule#  #\indentrule#  #\indentrule#  else:
  #\indentrule#  #\indentrule#  #\indentrule#  #\indentrule#  #\label{line:replacement-search-helping-end}#F[0].remove_info(edge) // failure
  #\indentrule#  if state.status != NON-SPANNING: continue
  #\indentrule#  if removal_op.can_be_replacement(edge):
  #\indentrule#  #\indentrule#  if states.CAS(edge, state, State(SPANNING, 0)):
  #\label{line:replacement-search-propose-replacement}##\indentrule#  #\indentrule#  #\indentrule#  if propose_replacement(removal_op, edge):
  #\indentrule#  #\indentrule#  #\indentrule#  #\indentrule#  F[0].remove_info(edge) // replacement proposed
  #\indentrule#  #\indentrule#  #\indentrule#  else:
  #\indentrule#  #\indentrule#  #\indentrule#  #\indentrule#  states[edge] = state // revert the state
  #\indentrule#  #\indentrule#  #\indentrule#  found = true, break
  #\indentrule#  else:
  #\label{line:replacement-search-promotion-start}##\indentrule#  #\indentrule#  // Promote the non-spanning edge to the next level
  #\indentrule#  #\indentrule#  F[1].add_info(edge)
  #\indentrule#  #\indentrule#  next_state := State(NON-SPANNING, 1)
  #\indentrule#  #\indentrule#  if states.CAS(edge, state, next_state):
  #\indentrule#  #\indentrule#  #\indentrule#  F[0].remove_info(edge) // promotion successful
  #\indentrule#  #\indentrule#  else
  #\indentrule#  #\indentrule#  #\indentrule#  #\label{line:replacement-search-promotion-end}#F[1].remove_info(edge) // promotion failed
  if !found:
  #\indentrule#  found = find_replacement(node.left, removal_op)
  if !found:
  #\indentrule#  found = find_replacement(node.right, removal_op)
  recalculate_flags(node)
  return found
}

// Closes the slot and returns a replacement or null
#\label{line:replacement-search-finalize}#func finalize_replacement_search(
  removal_op: RemovalOperation
): Edge {
  if propose_replacement(removal_op, CLOSED):
    return null // the slot is closed
  return removal_op.replacement.edge
}
  \end{lstlisting}
  \caption{Replacement search for $F_0$.}
  \label{listing:replacement-search}
\end{figureAsListing}

\paragraph{Sampling}
Iyer et al. discovered that for better performance dynamic connectivity algorithms should use \emph{random sampling} even in the deterministic Holm et al. algorithm~\cite{seq-experiments}. This random sampling is employed as a fast path in the replacement search to prevent the expensive promotion of spanning edges to the next level. We see the same tendency in the concurrent case, so we used this optimization for all algorithms in our experiments. Moreover, in the concurrent case, it is even more important as it reduces the average time that spanning edge removals hold the locks. The exact sampling implementation is not important and can fail because of concurrent operations since this is just a fast path.

\paragraph{Linearization Points}
The correctness of our algorithm was discussed in Theorem~\ref{theorem:non-spanning-edge-correctness} in the simplified case and the arguments remain the same for the general case. We verified our algorithms with stress testing and bounded model checking. Since our algorithm is complex, we would also like to list all possible linearization points for every \emph{successful} operation. For blocking edge modifications, we consider only linearization points of connected component changes.
\begin{itemize}
    \item \texttt{add\_edge(u, v)}
    \begin{itemize}
        \item spanning edge
        \begin{itemize}
            \item The same linearization point as for the ETT addition (Figure~\ref{linearizable-merging}).
            \item The successful \texttt{CAS} in line~\ref{line:add-spanning-lp-1} of Listing~\ref{listing:nonblocking-addition}.
            \item The successful \texttt{CAS} by a concurrent thread in line~\ref{line:add-spanning-lp-2} of Listing~\ref{listing:nonblocking-addition}.
            \item The successful \texttt{CAS} by a concurrent thread in line~\ref{line:add-spanning-lp-3} of Listing~\ref{listing:replacement-search}.
        \end{itemize}
        \item non-spanning edge
        \begin{itemize}
            \item The successful state \texttt{CAS} in line~\ref{line:add-non-spanning-lp-1} of Listing~\ref{listing:nonblocking-addition}.
            \item The successful state \texttt{CAS} in line~\ref{line:add-non-spanning-lp-2} of Listing~\ref{listing:add-complete}.
            \item The successful state \texttt{CAS} by a concurrent thread in line~\ref{line:add-non-spanning-lp-3} of Listing~\ref{listing:replacement-search}.
        \end{itemize}
    \end{itemize}
    \item \texttt{remove\_edge(u, v)}
    \begin{itemize}
        \item spanning edge
        \begin{itemize}
            \item If there is no replacement in $F_0$, then the linearization point is the same as for the ETT removal (Figure~\ref{linearizable-splitting}), otherwise components of connectivity do not change.
        \end{itemize}
        \item non-spanning edge
        \begin{itemize}
            \item The successful state \texttt{CAS} in line~\ref{line:remove-non-spanning} of Listing~\ref{listing:remove-edge-complete}.
        \end{itemize}
    \end{itemize}
    \item \texttt{connected(u, v)}
    \begin{itemize}
        \item See Theorem~\ref{reads-linearizable}, where we proved the existence of a linearization point.
    \end{itemize}
\end{itemize}

\end{document}